\newtheorem{theorem}{Theorem}
\newtheorem{lemma}{Lemma}
\theoremstyle{definition}
\newtheorem{definition}{Definition}
\newtheorem{example}{Example}
\title{SURFACE: A Practical Blockchain Consensus Algorithm for Real-World Networks}
\author{Zhijie Ren \;and\; Ziheng Zhou \\
  VeChain\\
  \texttt{zhijie.ren, peter.zhou@vechain.com} \\
}
\begin{document}
\maketitle

\begin{abstract}
SURFACE, standing for \textbf{S}ecure, \textbf{U}se-case adaptive, and \textbf{R}elatively \textbf{F}ork-free \textbf{A}pproach of \textbf{C}hain \textbf{E}xtension, is a consensus algorithm that is designed for real-world networks and enjoys the benefits from both the Nakamoto consensus and Byzantine Fault Tolerance (BFT) consensus. In SURFACE, a committee is randomly selected every round to validate and endorse the proposed new block. The size of the committee can be adjusted according to the underlying network to make the blockchain mostly fork-free with a reasonable overhead in communication. Consequently, the blockchain can normally achieve fast probabilistic confirmation with high throughput and low latency. SURFACE also provides a BFT mechanism to guarantee ledger consistency in case of an extreme  network situation such as large network partition or being under a massive DDoS attack.
\end{abstract}

\keywords{Blockchain \and Consensus \and Byzantine Fault Tolerance \and Bitcoin \and VeChain}

\section{Introduction}
Blockchain, originated from Bitcoin \cite{nakamoto}, has received great attention recently since it can be used to create a trusted ledger/system amongst multiple untrusted parties without a central authority or an authorized third party. One of the mostly discussed and vastly studied problems in blockchain is improving the throughput and latency, or in particular, the Bitcoin POW scheme, which is called the scalability problem \cite{croman}. Briefly speaking, the traditional Bitcoin POW is sub-optimal due to various reasons, one of which is the dependency between the Bitcoin POW's consensus and the synchrony of the network \cite{croman,wattenhofer}. More precisely, forks would be created due to the asynchrony of the network, which waste bandwidth to be transmitted, reduce the security, and thus reduces the throughput and increases the confirmation time, i.e., the latency. One way to mitigate this problem is to transform the chain structure to a Directed Acyclic Graph (DAG) so that all branches of blocks could be counted in the consensus. As a result, the mining power as well as the transactions on the forks are all taken into account and thus the throughput and the security would not be degraded \cite{spectre,phantom,conflux,prism}. Another approach is to prevent forks, i.e., using Byzantine Fault Tolerant~(BFT) algorithms to reach consensus for each block \cite{hybrid,byzcoin,algorand}.


In this paper, we introduce SURFACE, which is the abbreviation of \textbf{S}ecure, \textbf{U}se-case adaptive, and \textbf{R}elatively \textbf{F}ork-free \textbf{A}pproach of \textbf{C}hain \textbf{E}xtension. The name suggests some of the major features of SURFACE, one of which is ``relatively folk-free''. In SURFACE, we focus on practical networks in real-world and propose a blockchain consensus algorithm that is ``mostly'' fork-free by introducing a committee to validate the proposed block. At the meantime, the validation process of the committee is much simpler than running BFT algorithms in the whole network, thus it causes lower latency and communication overhead. As a result, the bandwidth will not be wasted due to transmitting forks that are eventually discarded and the latency will not be increased comparing to the schemes without committee. Moreover, the performance is adaptive as we could adjust the committee size according to the network condition to achieve the best throughput/latency performance for specific use cases.
In addition to the committee based block validation scheme, a BFT scheme is also included in SURFACE to achieve finality, i.e., uncompromised consistency (security), even under the strict asynchronous assumption.

\subsection{Background}
Whether forks are allowed in the blockchain is in fact depending on the types of consensus algorithms, i.e., whether it achieves Nakamoto consensus \cite{garay,metric} or BFT consensus \cite{lamport,bracha}. Bitcoin and Ethereum \cite{ethereum}, as well as many blockchains following their paths \cite{vechain,parity,ouroboros} achieve Nakamoto consensus. The achieved consistency is probabilistic, i.e., the probability that a block is immutable is not definitive but increases exponentially as more blocks appending to it. Some other blockchains use BFT algorithms to achieve deterministic consistency~\cite{algorand,libra,dfinity,stellar}, that is, blocks can be proven to be definitively immutable once they reach consensus.

Blockchain consensus algorithms could also be categorized into permissioned and permissionless by whether some permission from a trusted source is required for a node to participate in the consensus process. For example, Bitcoin's POW is a permissionless consensus scheme as participants only need to solve a hash function based Proof-of-Work (PoW) puzzle to participate in the consensus process without needing permissions from any party. On the other hand, EOS~\cite{eos}, using the Delegated-Proof-of-Stake (DPoS) as its consensus scheme, is permissioned. In DPoS, all stakeholders vote for a certain number of ``super nodes'' to participate in the consensus. However, eventually, the elected super nodes are permissioned by the trusted authority who initiated the vote.


\subsection{State-of-the-art}
In this paper, we focus on the consensus algorithm for permissioned blockchain that functions in large real-world networks, some of the favorable choices are BFT algorithms and Proof-of-Authority (PoA) based consensus schemes. Alternatively, we could modify permissionless consensus algorithms to our scenario.

BFT algorithms are originated from \cite{lamport} and has already been developed for decades \cite{bracha,benor,castro} before the blockchain era. However, these algorithms are either too theoretical to be used in practice or not designed for scenarios of blockchains and have poor scalability in large networks. Some novel BFT algorithms designed for blockchains like Byzcoin \cite{byzcoin}, Tendermint \cite{tendermint}, Algorand \cite{algorand}, and HotStuff BFT \cite{hotstuff} reduce the message complexity of classical BFT algorithms. However, the block interval has to be set in such a way that the consensus could be reached via multiple rounds of communication, or at least the messages from the super majority (more than 2/3 of the population) could be received, which brings a relatively high latency than PoA.


PoA is a type of permissioned Nakamoto consensus schemes, which is used in several blockchains including VeChain and Parity. In general, PoA schemes enjoy the advantage of a simpler mechanism and lower latency as the block interval only needs to be sufficient for the block to be broadcast. However, it also has many drawbacks. First, probabilistic consensus is sometimes not sufficient for practical uses, in particular, for high sensitive data or law related affairs. These is also known as the ``finality'' problem, which has been addressed in Casper \cite{casper}, Polkadots \cite{polkadots}, and some others. However, as all these algorithms are ongoing works that are independent from our research, we will not address them here but compare SURFACE to them in Section~\ref{s:comp}.

Second, as most PoA schemes are roughly a natural extension of PoW in the permissioned setting, they also suffer from the above mentioned scalability problem. This problem has already been addressed by enormous number of works with various approaches \cite{ghost,ng,ouroboros,phantom}, most of which consider permissionless settings but could be modified to be applied in permissioned blockchains. Directed Acyclic Graph (DAG) based approaches \cite{phantom,conflux} have a rather high complexity, in particular, while dealing with problems of block ordering and transaction repetition. Then, ``non-DAG'' Nakamoto consensus based approaches \cite{ng,ncmax,ouroboros,snowwhite} have their dependencies on synchrony and the performances of these algorithms are thus more situational. More precisely, in BFT based blockchain consensus algorithms, the block size and intervals have to be adjusted such that the block could be broadcast and the response from the super majority could be received in each round. However, in these Nakamoto consensus algorithms, the block size and intervals only need to be suffice for the block to be broadcast. Hence, the throughput would be higher when the message delays are small and most nodes behave honestly. However, it is not safe to only consider the best scenario as otherwise malicious nodes could easily attack the blockchain by creating forks. Hence, redundancy has to be introduced in the throughput and latency to cope with the non-ideal situations, which leads to a sub-optimal performance.



\subsection{Contribution}

SURFACE focuses on the practicality of the consensus algorithms in real-world networks and enjoys the benefit from both worlds of Nakamoto consensus and BFT consensus. More precisely, SURFACE achieves fast probabilistic confirmation with high throughput and low latency that are comparable to the state-of-the-art algorithms with Nakamoto consensus like \cite{ncmax,ouroboros}. At the meantime, SURFACE could achieve deterministic consistency (finality) as BFT algorithms like PBFT and HotStuff BFT \cite{castro,hotstuff}, and the consistency will not be compromised even if the network is asynchronous. In particular, our contributions are the following.
\begin{itemize}
    \item We use a committee endorsing mechanism for each proposed block, where the size of the committee is adjustable to the network condition in such a way that the probability of forks is minimum in the ``normal'' situation of the practical network.
    \begin{itemize}
    \item Comparing to other Nakamoto consensus algorithms without committee \cite{ng,ouroboros,snowwhite}, we minimize the probability of fork and thus achieve a higher and more stable performance in terms of throughput and latency with minimum overhead in communication.
    \item Comparing to BFT based algorithms, in case that the network is large, it does not take multiple rounds of messages responding from the super majority to reach consensus. Instead, a block could be probabilistically confirmed by a certain number of successive leaders and committees. At the meantime, the block interval could be set shorter as only a fraction of nodes need to respond. Hence, for transactions that do not require definitive consistency, SURFACE allows faster confirmation in large network.
    \end{itemize}
    \item We use a BFT based consensus algorithm to achieve deterministic consistency (finality) in asynchronous networks.
    \begin{itemize}
    \item Most Nakamoto consensus algorithms use synchronous assumptions and could not achieve consistency if the network is asynchronous, e.g., the security is not guaranteed in extreme situations. In that perspective, SURFACE achieves higher security. 
    \item We use an approach that is inspired by HotStuff BFT and prove that the consistency is achieved without any synchronous argument. As a result, instead of guaranteeing both liveness and consistency with some synchronous assumptions \cite{tendermint,casper}, we guarantee uncompromised consistency but not liveness in asynchronous network. We will argue that this is practical in real-world applications in Section~\ref{s:network}.
    \end{itemize} 
    \item We further incorporate with several novel ideas and mechanisms like block decomposition and delayed validation to further improve the performance of the consensus algorithm in practical networks.
\end{itemize}

\subsection{Outline}
In Section~\ref{s:network}, we explain practical asynchronous networks with their network and security assumptions that will be used throughout this paper. A detailed explanation of SURFACE will be given in Section~\ref{s:SURFACE} with a high level overview followed by all the functions used in SURFACE. Then, in Section~\ref{s:mec}, we explain some of the novel ideas and mechanisms used in the design of SURFACE and compare them to previous works. In Section~\ref{s:pa}, we give numerical analysis on the security of probabilitic confirmation and theoretical proofs for the consistency and liveness of SURFACE. Then, in Section~\ref{s:comp}, we compare SURFACE to some closely related and recent works. At last, we conclude our paper in Section~\ref{s:conc}.

\section{Network and Security Model}\label{s:network}

Bitcoin was described and considered as a secure value-transfer system that functions in asynchronous network as long as the majority (more than 50\%) of the hashing power is rational. However, this description has already been challenged and proven wrong in many aspects. In \cite{majority}, selfish mining is introduced, which could attack Bitcoin with merely 25\% of the hashing power. In \cite{croman,wattenhofer}, it is stated that Bitcoin is only secure if the network of Bitcoin is synchronous. Then, in \cite{rational}, a collection of literature pointing out vulnerabilities in Bitcoin is listed \cite{redballoon,minerdilemma} etc., which leads to a conclusion that ``rationality'' is not a potent argument for honesty. In other words, evidences show that it is very hard to design incentive mechanism such that the rational players would only perform a certain behaviors. Hence, rational players should be considered as Byzantine players who would behave arbitrarily.

On the other hand, despite of all vulnerabilities and synchronous limitations, Bitcoin is still considered as secure and suitable for asynchronous network in the ``common belief'' as it functions well in real-world for years. 
Hence, in this paper, we take both the theoretical asynchoronous network model and the practical network in real life into account.

We consider two scenarios: firstly, in a ``normal'' situation, we assume that the network is synchronous, the nodes are semi-trusted as their behavior is restricted by rational arguments, and the messages are propagated via gossip protocol. We argue that this is a reasonable assumption as most blockchain systems, including Bitcoin, function in this kind of networks in the practice. Then, we consider an ``abnormal'' situation that either the network is partitioned due to accident or attack, or the adversaries are trying to create inconsistency. In the ``abnormal'' situation, we use the strict asynchronous BFT assumption, i.e., the message delay as well as the behavior of the malicious nodes are arbitrary.

We assume that the network could arbitrarily switch in between these two situations. However, we assume that the abnormal situation is ``temporary'', i.e., after each abnormal period of arbitrary length, there will always be a long normal period in which the transactions could be confirmed.

Then, we aim for different goals in different scenarios. In ``normal'' situation, we aim for high throughput, fast confirmation, and probabilistic consistency which is secure in the same fashion as Bitcoin. Then, to cope with the abnormal situation, we aim for uncompromised security such that the security of the blockchain is guaranteed without any synchronous argument. However, we do not guarantee liveness in abnormal situations.

The reason of choosing this model over traditional BFT network model or Nakamoto consensus model is purely practical. First, it is reasonable to assume that the ``normal'' situation is dominant as by our observation, most blockchains are working in the normal situations and rarely experienced abnormal situations, particularly, for permissioned blockchains in which the nodes are authorized to participate in the first place. At the meantime, it is crucial that the blockchain could guarantee its consistency without using any argument of synchrony or rationality, otherwise it is vulnerable to various types of attacks.


Here, we further specify our {\em practical asynchronous network} model.

\subsection{Network and cryptographic primitives}
We consider a network with $n$ nodes, denoted by node $i\in \{1,2,\ldots,n\}$. The number of adversaries is $f$ that satisfying $3f+1 \leq n$. We assume that nodes have limited computation capacity such that they could not break the cryptographic primitives used in this paper. The hash function is modelled as a random oracle.

\subsection{Normal Situation}

In the normal situation, the message delay is upper bounded by a known $\Delta$. Moreover, messages are propagated to the network via gossip protocol, i.e., adversaries cannot secretly split the network by sending different messages to different nodes, as these messages will eventually be gossiped to all nodes. W.l.o.g., we assume that in synchronous situations, if a message is received by an honest node at time $t$, then it will be received by all honest nodes before $t+\Delta$.

Moreover, in normal situation, we assume there is an incentive or punishment mechanism so that adversaries are reluctant to perform attacks on liveness, i.e., they will respond and propose blocks normally.


\subsection{Abnormal Situation}


In abnormal situation, the network is asynchronous but not hostile. By not hostile we mean that the duration of message delay follows some random distribution which is independent of the message content. In particular, as we use a VRF based committee selection scheme, the role of a committee member is not revealed until they send their proofs. Hence, a non-hostile asynchronous network will not ``coincidently'' delay all messages from nodes with a certain roles, as if the network could break the VRF or know the private keys of the nodes in advance. 

The malicious nodes are Byzantine nodes, i.e., they could perform arbitrary behaviors, including not responding or behaving honestly. We also allow them to manipulate the network, e.g., arbitrarily delay the message. Package losses, i.e., the losses of messages send by honest nodes, are not considered in this paper.

\subsection{Situations changes}

The network could change between these two situations arbitrarily and the duration of any situation is arbitrary. However, it is guaranteed that abnormal situation is temporary, i.e., after each period of abnormal situation, there will be a relatively ``long'' period of normal situation, i.e., the period is sufficiently long for some transactions to be confirmed. Then, we assume that when the situation is back to normal, all blocks that are transmitted by honest nodes during the abnormal situation will be received by all honest nodes.

\subsection{Differences from other models}

Our network assumption is a strictly stronger assumption than asynchronous assumption and partial synchronous assumption, as we make extra assumptions of the ``normal situation'' and the ``asynchronous but not hostile'' network. Hence, BFT algorithms that works in asynchronous networks \cite{miller} and partial synchronous networks \cite{castro,hotstuff} could also be applied in our network.

Then, it is strictly weaker than the synchronous assumption, as there could be asynchronous periods. Hence, BFT algorithms, as well as most NC algorithms, that function under synchronous assumption, fail in this network.

\subsection{Blockchain consensus algorithm in practical asynchronous networks}
In practical asynchronous networks, we propose a novel framework of blockchain consensus algorithms which contains two components:

\begin{itemize}
    \item The first component guarantees that in normal situation, the consensus algorithm should be able to achieve NC, which suggests that:
    \begin{itemize}
        \item {\bf Probabilistic consistency (safety)}: If an honest node {\em confirms} a block $B$ at height $h$, then the probability that another honest node {\em confirms} a block $B' \ne B$ at height $h$ is smaller than $\epsilon_1$.
        \item {\bf Liveness}: New transactions can be confirmed.
    \end{itemize}
    \item The second component guarantees that in abnormal situation, the consensus algorithm should be able to achieve {\bf definitive consistency (safety)}, i.e., if an honest node {\em finalizes} a block $B$ at height $h$, then there cannot be another honest node {\em finalizes} a block $B' \ne B$ at height $h$.
\end{itemize}

\section{SURFACE}\label{s:SURFACE}

In SURFACE, we merge several schemes and concepts from existing blockchain and BFT algorithms. At the meantime, we make many improvements by proposing some additional mechanisms. In this section, we first give a high level overview on SURFACE. Then, we explain SURFACE in detail by firstly giving two core functions: a main function that is executed at the beginning of each round and a block receiving function that is used every time a new correct block is received. Then, we explain all the functions and variables used in these functions, without giving much explanation of their purposes, which we will do in Section~\ref{s:mec}.

\subsection{A high level overview}
SURFACE works in practical asynchronous networks, in which we define round according to Global Stabilization Time (GST) with a predetermined time interval. Further, we define epoch as a relatively long duration consisting many rounds. In this paper, the hash function is modeled by a random oracle.

Then, for each round, we randomly select a leader to create a block and a committee for validation. The leader is selected by a hash function and a random beacon determined by some randomness created and acknowledged by the end of the previous epoch by all nodes. The committee members of a round are determined by each node comparing a random number generated by a Verifiable Random Function (VRF) \cite{vrf1} to a given threshold. As a result, the leader of a round is known by all nodes by the end of the previous epoch, and the committee members of a round are only revealed when they announce their roles by revealing their proofs.

Each leader needs to first broadcast the new block and collect the endorsements from $d$ committee members to generate a valid block. There remains a possibility of forks, although the chance is much lower than the algorithms without committee and it can only be caused by an adversarial leader colluding with an adversarial committee. In case of forks, nodes will determine the valid chain by the classical ``longest chain rule''. The above-mentioned algorithm is designed to have fast confirmation and high throughput in the normal situation.

The abnormal mode of our algorithm is a BFT consensus algorithm that guarantees consistency in asynchronous network and liveness if the network regains synchrony. We introduce a ``finality vector'' that is included in each block. It is a collection of new view messages, prepare messages, pre-commit messages, and commit messages that are sent by the leader and committee together. Then, we use a chain selection and finality rule inspired by HotStuff BFT \cite{hotstuff} to reach BFT, which is alternatively called ``finality'' in this paper, for the blockchain.


\subsection{Main Function}

At round $r$, a node $u$ calls the main function ${\tt MainFunction}(u,r,mode(r-1),F(u,r-1))$.

\begin{algorithm}
\caption{Main Function ${\tt MainFunction}(u,r,mode(r-1),F(u,r-1))$}\label{alg:np}
\begin{algorithmic}
\State $C_0(u,r-1)$ is the finalized chain till the last round
\State ${\cal TX}_0(u) \gets$ the set for all unpublished valid transactions known to $u$
\State $mode(u,r) \gets {\tt SwitchMode}(u,r,mode(r-1))$
\State ${\cal B}(u,r) \gets {\tt CandidateBlocks}({\cal B}_{hon}(u,r),{\cal B}_{val}(u,r),mode(u,r))$
\State $C(u,r) \gets {\tt Canonical Chain}({\cal B}(u,r),F(u,r-1))$
\State $F(u,r) \gets {\tt Finality}({\cal B}(u,r),C(u,r),F(u,r-1))$
\State $role(u,r) \gets {\tt RoleDetermine}(u,r,C(u,r))$
\If{$role(u,r) == {\tt 'leader'}$}
    \State ${\cal TX}(B(u,r)) \gets$ an ordered set of new transactions packed by $u$ to be published in this round
    \State $(s(B(u,r)), Sig_u(s(B(u,r)))) \gets {\tt BlockSummary}(C(u,r),{\cal TX}(B(u,r)))$
    \State Broadcast $s(B(u,r)), Sig_u(s(B(u,r)))$ and ${\cal TX}(B(u,r))$
    \State Receive for $d$ endorsements $E_v(Sig_u(s(B(u,r)))), role(v) = {\tt 'committee'}$ and combine it into a $\hat{\cal E}(B(u,r))$
    \State Broadcast $\hat{\cal E}(B(u,r)),Sig_u(\hat{\cal E}(B(u,r)))$
\ElsIf{$role(u,r)== {\tt 'committee'}$}
    \State Receive $s'$ and $sig'$
    \State $valbs(s') \gets {\tt ValBlockSum}(s',sig',C(u,r))$
    \If{$valbs(s')=={\tt 'valid'}$}
        \State $E_u(s') \gets {\tt Endorsement}(s',u)$
        \State Send $s',sig',End_u(s')$ to the leader
        \State Receive ${\cal TX}$
    \EndIf
\EndIf
\end{algorithmic}
\end{algorithm}

A node $u$ will always keep track of an unpublished valid transaction set ${\cal TX}_0(u)$. At the start of each round, it detects its mode $mode(u,r)$ of this round and determines the set of blocks that he will consider to determine the canonical chain $C(u,r)$, namely the candidate blocks ${\cal B}(u,r)$, according to the mode of this round. Then, he will update his own finality vector $F(u,r)$ according to his finality vector of the previous round as well as the finality vectors (actually the pruned finality vectors, which will be explained later) collected in the canonical chain $C(u,r)$.

Then, at the beginning of each round, each node determines his role by ${\tt RoleDetermine}(u,r)$. 
If $u$ is the leader of this round, he makes a block summary $s(B(u,r))$ according to $C(u,r)$ and ${\cal TX}(B(u,r)) \subseteq {\cal TX}_0(u)$. He broadcasts the block summary $s(B(u,r))$, a corresponding signature $Sig_u(B(u,r))$, along with ${\cal TX}(B(u,r))$ and waits for the endorsements from the committee members of this round. Once $d$ endorsements have been received, he combines these endorsements into a collected endorsements $\hat{\cal E}(B(u,r))$ and broadcasts it with $Sig_u(\hat{\cal E}(B(u,r)))$.

Then, if $u$ is a committee member of this round, he waits for the block summary $s'$ and a signature $sig'$. He validates the block summary and endorses for it by signing it if it is valid. Then, he sends his endorsement $E_u(s',sig')$ alongside with $s',sig'$ and at the meantime receives ${\cal TX}(B(u,r))$.

\subsection{Blocks}
In SURFACE, the blocks are not as straightforward as they are in Bitcoin since a block is decomposed into multiple parts that are not sent together as an actual ``block'', but separately. The matching parts are combined into a block by the receiver. Then, the receivers will validate these blocks and for each round, they will select a chain according to the {\em canonical chain} rule.

However, not all blocks are directly considered as {\em candidate blocks} for the canonical chain. Nodes will also use their local knowledge to judge whether the block is {\em suspicious}, i.e., sent by adversaries to intentionally cause inconsistency. When the node is in the normal mode, inconsistent blocks sent by the same leader or blocks received outside of their rounds are all suspicious and will not be considered as candidate blocks. However, when the node is in abnormal mode, these blocks will be also considered. Hence, nodes will keep two sets of blocks, which are called {\em valid blocks} and {\em honest blocks}, respectively. Then, they will select the candidate blocks from these two sets accordingly to the mode.

\subsubsection{Composition of blocks}
A block $B(u,r)$ consists of three parts: A set of transactions ${\cal TX}(B(u,r))$, a signed block summary $s(B(u,r)),Sig_u(s(B(u,r)))$, and a signed collected endorsement $\hat{\cal E}(B(u,r)),Sig_u(\hat{\cal E}(B(u,r)))$.

\paragraph{Transaction set ${\cal TX}(B(u,r))$}
Each node will maintain a set of valid and unpublished transactions ${\cal TX}_0(u)$ according to the canonical chain that they observed. Then, if he is in turn as a leader, he will make a transaction set to publish in this turn ${\cal TX}(B(u,r)) \subseteq {\cal TX}_0(u)$.

\paragraph{Block summary}
A block summary is composed by the following items in the exact order.
\begin{enumerate}
    \item A hash of the block summary and the collected endorsement of the previous block, i.e., $H(s(B(u,r)) | \hat{\cal E}(B(u,r)))$, where $B(u,r)$ is the last block of the canonical chain $C(u,r)$.
    \item The current epoch and the round number.
    \item A Merkle root of ${\cal TX}(B(u,r))$.
    \item A pruned finality vector $F_p(u,r)$ (Subsection~\ref{ss:final}).
\end{enumerate}


\paragraph{Collected endorsement}
Once $d$ endorsements are collected, they are ordered according to the ascending order of the public keys of committee members and concatenated into the collected endorsement $\hat{\cal E}(B(u,r))$.

\subsubsection{Valid blocks}

We consider a block $B$ as ``valid'' if it is self-contained, i.e., it is consistent with itself and all previous blocks on its chain $C$. Hence, we define the correctness of a block as the following
\begin{itemize}
    \item The leader, round number, epoch number, and the finality vector are consistent with what can be computed from $C$.
    \item The block summary and the collected endorsement are correctly signed by the leader.
    \item The Merkle root in the block summary is consistent with the transaction set.
    \item The transactions in the transaction set are all valid with regards to the chain that it is on.
    \item There are $d$ endorsements in the collected endorsements from the committee members of round $r$.
    \item All previous blocks on $C$ are also valid.
\end{itemize}
Throughput this paper, if we refer to a ``block'', then it is a valid block. In other words, we do not take a block that is not self-contained into account when we consider the blockchain, even if it has a valid block summary, or enough endorsements from the committee, or valid transaction sets, etc. We denote all blocks that are received by node $u$ till round $r$ by ${\cal B}_{val}(u,r)$.


\subsubsection{Honest blocks}
Nodes that are participating in the consensus will also mind the honesty of the block, i.e., whether they are sent by honest nodes. In normal situation, a block will be received by all honest nodes in that round. Hence, a block is suspicious if
\begin{itemize}
    \item there are different blocks or block summaries generated by the same leader of that round;
    \item it is received outside of their round.
\end{itemize}
We define the blocks that are not suspicious as honest blocks, i.e., the honest blocks are
\begin{itemize}
    \item received in the corresponding rounds;
    \item with no received blocks or block summaries that are from the same leader of the same round but are different.
\end{itemize}
We denote all honest blocks that are received by node $u$ till round $r$ by ${\cal B}_{hon}(u,r)$. Note that honest blocks only make sense in ``normal situations''. We will specify how nodes determine their current mode in Subsection~\ref{ss:mode}.

\subsection{Candidate Blocks}

At the start of each round, node $u$ runs the function ${\tt CandidateBlocks}(u,r)$ to determine the blocks to be considered for the canonical chain of this round. In normal mode, a node will only consider honest blocks for their chain selection and could confirm transaction using probabilistic metrics. In abnormal mode, a node will consider all blocks for their chain selection and be aware of the confirmed transactions might not be final.


\begin{algorithm}
\caption{Determine the candidate blocks ${\tt CandidateBlocks}({\cal B}_{hon}(u,r),{\cal B}_{val}(u,r),mode(u,r))$} \label{alg:candb}
\begin{algorithmic}
\State $r' \gets$ the last round that $mode (u,r')\ne {\tt 'normal'}$ 
\If{$mode(u,r)={\tt 'normal'}$}
    \State ${\cal B}(u,r) \gets \{B: B \in ({\cal B}_{hon}(u,r) \backslash {\cal B}_{hon}(u,r'+1)) \cup B_{val}(u,r') \wedge B \sim \textrm{fn}(u,r-1)$
\Else
    \State ${\cal B}(u,r) \gets \{B: B \in {\cal B}_{val}(u,r), B \sim \textrm{fn}(u,r-1)$
\EndIf
\Return ${\cal B}(u,r)$
\end{algorithmic}
\end{algorithm}
Here, $B \sim {\cal B}$ is defined as $B$ is not conflict with any block in ${\cal B}$, and $\textrm{fn}(u,r-1)$ is the block that is considered as final by node $u$ in round $r-1$, which we will explain later.


\subsection{Canonical Chain}

We use the following rules to determine the canonical chain:
\begin{enumerate}
    \item $C$ includes $\textrm{fn}(u,r)$, which is the newest finalized block considered by $u$ at round $r$.
    \item For two chains $C_1, C_2 \in {\cal B}(u,r)$ satisfying the first rule, select the longer chain.
    \item If the length of the chain is tied, select the chain with a newer block.
    \item If there are still multiple chains left, select the one that is received first.  
\end{enumerate}

The function ${\tt CanonicalChain}({\cal B}(u,r))$ is an implementation of these chain selection rules. 

\subsection{Finality}\label{ss:final}

A finality vector $F(u,r)$ is a vector including five values, each value is a hash pointer to a block. Then, if $F(u,r)=[H(B_0),H(B_1),H(B_2),H(B_3),H(B_4)]$ we define $\textrm{nv}(u,r)= B_0$, $\textrm{pp}(u,r)=B_1$, $\textrm{pc}(u,r)=B_2$, $\textrm{cm}(u,r)=B_3$, and $\textrm{fn}(u,r)=B_4$, representing the blocks that are seen as the provisioned view, being prepared, being pre-committed, being committed, and finalized by node $u$ in round $r$, respectively.
A pruned finality vector $F_p(u,r)$ excludes the last value of $F(u,r)$.
Then, we denote such a pruned finality vector included in block $B$ by $F(B)=[H(\textrm{nv}(B)),H(\textrm{pp}(B)),H(\textrm{pc}(B)),H(\textrm{cm}(B))]$. In the context of HotStuff BFT \cite{hotstuff}, $F(B)$ can be seen as the new view requests including a proposal of block $\textrm{nv}(B)$, prepare messages for block $\textrm{pp}(B)$, pre-commit messages for block $\textrm{pc}(B)$, and commit messages for block $\textrm{cm}(B)$ sent by the leader and all signed committee members in block $B$. 
Further, we use the notation $B \preceq B'$ if block $B$ is in an earlier round than or the same round as $B'$ and $B \succeq B'$ if block $B$ is in a later round than or the same round as $B'$.
We use the notation $u \vdash \textrm{pc}(u,r)=B$ to represent that $u$ sends a pc message for $B$ at round $r$. Note that this is a sufficient but not necessary condition for $\textrm{pc}(u,r)=B$.


We define a counter~$N_{\textrm{tp}}(v,b,C)$ to count the number of messages of one of the four above-mentioned types. It counts the number of distinct leaders and committee members that produce blocks~$\{B: B\in C \,\wedge\, \textrm{nv}(B)=v\,\wedge\, \textrm{tp}(B)=b\}$ where $\textrm{tp}=\{\textrm{nv},\textrm{pp},\textrm{pc},\textrm{cm}\}$ standing for ``new view'', ``prepare'', ``pre-commit'' and ``commit'', respectively. Similarly, we also define $N_{\textrm{tp}}(b,C)$ with the same definition with $N_{\textrm{tp}}(v,b,C)$ except that it ignores the new view check~$\textrm{nv}(B)=v$. If a block $B$ has $N_\textrm{nv}(B,C), C \in {\cal B}(u,r)$, we call $B$ a {\em view}, and we use the notation $V|_B[\mbox{conditions of }B]$ for a block $B$ that satisfies $N_\textrm{nv}(B,C)\geq 2f+1, C \in {\cal B}(u,r)$ as well as the given conditions and $\overrightarrow{V|_B}[\mbox{conditions of }B]$ for the latest such view.

Given the locally observed canonical chain, node~$u$ uses Algorithm~\ref{alg:final} to determine the finality vector~$F(u,r)$ that records the node's BFT consensus status. The vector is initialized as $F(u,0)=[H(b_0),H(b_0),H(b_0),H(b_0),H(b_1)]$ where $b_0$ is an unique identifier standing for ``null'' and $b_1$ the genesis block.


\begin{algorithm}
\caption{Updating the finality vector $F(u,r)$ based on ${\cal B}(u,r),C(u,r)$} \label{alg:final}
\begin{spacing}{1.2}
\begin{algorithmic}[1]
\State Let $B$ be the last block on $C(u,r)$.
\State $\textrm{type}(u,r)\gets\textrm{type}(u,r-1)$ where $\textrm{type}=\{\textrm{nv},\textrm{pp},\textrm{pc},\textrm{cm},\textrm{fn}\}$. We denote $C=C(u,r)$, $\mathcal{B}=\mathcal{B}(u,r)$, and $\textrm{type}(u)=\textrm{type}(u,r)$ to simplify the presentation below.
\State {\bf \# Update $\textrm{cm}(u)$.}
\If{$N_{\textrm{pc}}\big(\textrm{nv}(B),\textrm{pc}(B),C\big) \geq 2f+1$}
    \State $\textrm{cm}(u) \gets \textrm{pc}(B)$, $\textrm{pc}(u) \gets b_0$
\ElsIf{$N_{\textrm{cm}}\big(\textrm{cm}(B),{\cal B}(u,r)\big) \geq f+1,\,\textrm{cm}(B)\succ \textrm{cm}(u)$}
    \State $\textrm{cm}(u) \gets \textrm{cm}(B)$
\EndIf
\If{$\textrm{fn}(u)\prec \textrm{cm}(u)$}
    \State $\textrm{fn}(u) \gets \textrm{cm}(u)$
\EndIf
\State {\bf \# Update $\textrm{pc}(u)$.}
\If{$\exists B: \Big(\exists B_1' \in C' \subseteq {\cal B} : N_\textrm{pp}(B_1', B, C') \geq 2f+1 \wedge  N_\textrm{pc}(B_1', B', C') = 0, \forall B' \perp B) \wedge ( \forall C'' \subseteq {\cal B}, V|_{B_2''}[B_2'' \succ B_1', B_2'' \in C''] : N_\textrm{pc}(B_2'', B, C'') > 0 ) $}
\State{$B_\textrm{RTPC} \gets B$}
\EndIf
\State{$\hat{B} \gets \textrm{pc}(u,\max_{\textrm{pc}(u,r') \ne b_0}(r'))$}
\If{$\exists r_1': u \vdash \textrm{pc}(u,r_1')=\hat{B}$}
    \State{$\hat{B}_1 \gets \textrm{nv}(u,\max_{u \vdash \textrm{pc}(u,r_1')=\hat{B}}(r_1'))$}
\EndIf
\If{$B_\textrm{RTPC}$ exists $\wedge (\hat{B} \sim B_\textrm{RTPC} \vee \hat{B} = b_0 \vee \hat{B}_1$ does not exist$)$}
    \State{$\textrm{pc}(u) \gets B_\textrm{RTPC}$}
\ElsIf{$B_\textrm{RTPC}$ exists $\wedge \exists V|_{B_2'}[B_2' \succ \hat{B}_1, B_2' \in C_2', N_\textrm{pc}(B_2',\hat{B},C_2')=0]$}
    \State{$\textrm{pc}(u) \gets B_\textrm{RTPC}$}
\EndIf

\State {\bf \# Rule for unlocking $\textrm{pc}(u)$.}
\If{$B_\textrm{RTPC}$ exists $\wedge \textrm{pc}(u) \ne B_\textrm{RTPC} $}
\State{$\textrm{pc}(u) \gets b_0$}

\EndIf

\State {\bf \# Update $\textrm{pp}(u)$.}
\If{$\Big(N_{\textrm{nv}}(\textrm{nv}(B),C) \geq 2f+1\Big)  \wedge \Big(N_{\textrm{pc}}(\textrm{nv}(B),B',C)=0 \,\,\textrm{for}\,\, \forall B' \in {\cal B} \wedge B' \perp \textrm{nv}(B)\Big) $}
            \State $\textrm{pp}(u) \gets \textrm{nv}(B), \,\textrm{pc}(u) \gets b_0$
\EndIf

\State {\bf \# Update ${\textrm{nv}}(u)$.}
\If{$\textrm{nv}(B) \succ \textrm{nv}(u)$}
    \State $\textrm{nv}(u) \gets \textrm{nv}(B)$
\ElsIf{$\Big(\textrm{nv}(B) \perp \textrm{nv}(u)\Big) \vee \Big( \textrm{nv}(u) = b_0\Big)$}
    \State $\textrm{nv}(u) \gets B$ 
\EndIf
\If{$N_{\textrm{nv}}(\textrm{nv}(B),C) \geq 2f+1$}
    \State $\textrm{nv}(u)\gets B$
\EndIf
\State {\bf \# Unlock the prepare block if it conflicts the provisioned view block.}
\If{$\textrm{pp}(u) \perp \textrm{nv}(u)$}
    \State $\textrm{pp}(u) \gets b_0$
\EndIf

\State \Return{$H(\textrm{nv}(u)),H(\textrm{pp}(u)),H(\textrm{pc}(u)),H(\textrm{cm}(u)),H(\textrm{fn}(u))$}

\end{algorithmic}
\end{spacing}
\end{algorithm}

Let $C$ be the canonical chain of round $r$, here, we describe the rules that node $u$ uses in round $r$ in Algorithm~\ref{alg:final} in words:

\begin{itemize}
    \item {\bf Commit rules:}
    \begin{enumerate}
        \item $u$ commits a block $B$ if he has received $2f+1$ pc messages of $B$ in one view.
        \item $u$ commits a block $B$ if he receives $f+1$ cm messages for $B$.
    \end{enumerate}
    \item {\bf Pre-commit rules:} There exists a block that is ``ready to pre-commit'':
    \begin{enumerate}
        \item There exists a view $B_1' \in C' \subseteq {\cal B}$ such that $N_\textrm{pp}(B_1', B, C') \geq 2f+1$ and $N_\textrm{pc}(B_1', B', C') = 0$ for all $B' \perp B$;
        \item For all $C'' \subseteq {\cal B}$ and $V|_{B_2''}[B_2'' \succ B_1', B_2'' \in C'']$, it holds that $N_\textrm{pc}(B_2'', B, C'') > 0$.
    \end{enumerate}
    
    Then, let us denote the latest pre-committed block of $u$ by $B'$, i.e., $B' = \textrm{pc}(u,\max_{\textrm{pc}(u,r')\ne b_0}(r'))$. 
    We denote the latest view that $u$ sends a pc message for $B'$ by $B_1'$, i.e., $B_1'= \textrm{nv}(u,r_1)$ where $r_1 = \max_{u \vdash \textrm{pc}(u,r_1')=B'}(r_1')$. Now, $u$ pre-commits $B$ in one of the two cases:
    \begin{enumerate}
        \item $B' \sim B$ or $B'$ does not exist or $B_1'$ does not exists;
        \item $B' \perp B$, but there exists a view $V|_{B_2'}[B_2' \succ B_1', B_2' \in C_2', N_\textrm{pc}(B_2',B',C_2')=0]$.
    \end{enumerate}
            \item {\bf Unlock the pre-committing block:} If $\textrm{pc}(u,r) = B$, he set $\textrm{pc}(u,r) = b_0$ if $B$ is no longer ``ready to pre-commit'' according to the current ${\cal B}(u,r)$.
    \item {\bf Prepare rules:} $u$ prepares $B$ if $u$ receives $2f+1$ nv messages of $B$ with no pc message for a conflicting block $B' \perp B$ on the canonical chain.
    \item {\bf New view rules:}
    \begin{enumerate}
        \item giving a previous block $B$, if $\textrm{nv}(u,r-1) \prec \textrm{nv}(B)$, then $\textrm{nv}(u,r) \gets \textrm{nv}(B)$.
        \item giving a previous block $B$, if $\textrm{nv}(u,r-1) \succeq \textrm{nv}(B)$ and $\textrm{nv}(u,r-1) \perp \textrm{nv}(B)$, then $\textrm{nv}(u,r) \gets B$.
        \item set block $B$ as the provisioned new view if $2f+1$ consistent nv message has been received.
    \end{enumerate}
    \begin{itemize}
        \item {\bf Unlock the preparing block} (line 32) if the block that it prepares conflicts the provisioned new view block.
    \end{itemize}
\end{itemize}

\subsection{Role determination}
The leader and the committee members are selected by using a hash function and a VRF, respectively, on the random beacon of the epoch and the round number.

\subsubsection{Random beacon}

A random beacon of epoch $e$, denoted by $b_e$, is determined in the epoch $e-1$. We find the $\hat{\cal E}$ contained in the last finalized block before round $r_{e-1}-\tau_0-\tau_1$, where $r_{e_1}$ is the last round of epoch $e$, $\tau_0$ is a parameter set according to the estimated maximum latency of block propagation, and $\tau_1$ is an estimated maximum latency of finality. We then compute $b_e$ by $b_{e}=H(\hat{\cal E})$.

\subsubsection{Role determination function}
We then use the ${\tt RoleDetermine}$ function to determine the roles of nodes in round $r$, which determines the leader of each round by a hash function and determines the committee by comparing a random number generated by a VRF function to a predetermined threshold $\epsilon$, which is set such that the expected number of committee members is $c$. 

Here, we use the ECDSA-based VRF scheme proposed in \cite{vrf}, where the VRF could be abstracted as the following.
Node $u$ could use his private key $sk_u$ to compute a random number $\beta_u$ with an arbitrary input $\alpha$ by:
\begin{equation}
    \beta_u= f_{\textrm{VRF}}(\alpha, sk_u)
\end{equation}
Then, node $u$ could provide a proof:
\begin{equation}
    \pi_u=\Pi_{\textrm{VRF}}(\alpha ,sk_u).
\end{equation}
Any node could use the public key of $u$, denoted by $pk_u$ and $\Pi_u$ to verify that $\beta_u$ is {\bf collision free} defined similarly to a cryptographic hash function, {\bf pseudorandom} in the sense that it is indistinguishable from a random number created by another node, and {\bf unique} in the sense that each $\alpha$ corresponds to a unique $\beta_u$. Moreover, the ECDSA-based VRF scheme proposed in \cite{vrf}, we also have
\begin{equation}
    \beta_u=H(\pi_u).
\end{equation}
Further, we define a mapping function $M(x)= u$ that will map an arbitrary input $x$ to a node $u$ with uniform probability. Now we introduce function ${\tt RoleDetermine}(u,r,C(u,r))$.

\begin{algorithm}
\caption{Role Determination ${\tt RoleDetermine}(u,r,C(u,r))$} \label{alg:role}
\begin{algorithmic}
\State $e \gets$ the current epoch by $r$
\State $b_e \gets$ the random beacon of this epoch with $C(u,r)$
\If{$u == M(b_e|r)$}
    \State $role(r) \gets {\tt 'leader'}$
\EndIf
\If{$H(f_{\textrm{VRF}}(b_e|r,sk_u)) \leq \epsilon$}
    \State $role(u) \gets {\tt 'committee'}$
\EndIf
\Return $role(u)$
\end{algorithmic}
\end{algorithm}

\subsection{Committee member's procedure}

If node $u$ is a committee member in round $r$, in the time interval of $[\Delta, 2\Delta]$ in round $r$, he will wait for the block summary as well as the transaction set sent by the leader of the round.
Then, he calls the ${\tt ValBlockSum}(s',sig',{\cal TX},C(u,r))$ to validate the block summary. If the result is ${\tt 'valid'}$, it endorses this block summary and broadcasts its endorsement. The expected number of the committee members of each round is $c$ and the leader waits for the endorsements from $d<c$ committee members, combines them, and sign them as shown in Algorithm~\ref{alg:np}. Here, $c$ and $d$ are set accordingly to the probability of misbehavior in the normal situation such that with high probability, a block could be proposed in each round.

\subsubsection{Validation of the block summary}

The block summary is validated with ${\tt ValBlockSum}(s',Sig',{\cal TX},C(u,r))$.

\begin{algorithm}
\caption{Validate Block Summary ${\tt ValBlockSum}(s',Sig',{\cal TX},C(u,r))$} \label{alg:vbs}
\begin{algorithmic}
\State $s'$ is a summary in the form of $H(h(B'))|r'|MR({\cal TX})'|F'$;
\State $B \gets$ the last block on $C(u,r)$;
\If{$B' == B \wedge r' == r \wedge F(u,r)==F'  \wedge Sig' \mbox{ is } s' \mbox{ signed} \mbox{ with the correct key of the leader of round }r$}
    \State \Return ${\tt 'valid'}$
\EndIf
\end{algorithmic}
\end{algorithm}

\subsubsection{Endorsement}

If the result of the validation of the block summary is ${\tt 'valid'}$, node $u$ endorses it by broadcast $E_u(s')={\tt Endorsement}(s', u)$.

\begin{algorithm}
\caption{Endorsement algorithm ${\tt Endorsement}(s',u)$} \label{alg:eds}
\begin{algorithmic}
    \State $\pi_u \gets \Pi_u(b_e|r,sk_u)$
    \State \Return $(\pi_u, Sig_u(s'))$
\end{algorithmic}
\end{algorithm}

Here, $\pi_u$ should be broadcast with the endorsement so that the role of $u$ can be verified.

\subsection{Mode switch}\label{ss:mode}
In SCEM, the main function has two modes, normal mode and abnormal mode, to deal with normal and abnormal situations in the network, respectively. In the normal situation, honest nodes will not consider dishonest blocks, except for the dishonest block they received before this period of the normal situation. In the abnormal situation, all blocks will be considered. Moreover, honest node will not confirm any blocks in abnormal mode.

In fact, honest nodes are encouraged to use any on-chain and off-chain information to determine their modes as long as it has a low false negative probability for abnormal situations, i.e., if an honest node confirms a block, then the probability that the node is actually in abnormal situation should be smaller than a given $\epsilon_2$. In fact, if honest nodes are rational, they would naturally stop confirming transactions if they suspect the network is abnormal. Here, we propose a practical mode switching mechanism based on randomly pinging other nodes.

\begin{algorithm}
\caption{Determine the mode according to the network ${\tt Mode}(u,r)$} \label{alg:smode}
\begin{algorithmic}
\State pinging node with $H(\pi_u|v) \mod{n} \leq c$, where $v$ is the id of the node.
\If{more than $\frac{c(n-f)}{n}$ nodes response in time}
    \State $\textrm{cnx}(r)={\tt normal}$
\EndIf
\If{$\textrm{cnx}(r-k+1),\textrm{cnx}(r-k+2),\ldots,\textrm{cnx}(r)={\tt normal}$}
    \State $\textrm{mode}={\tt normal}$
\Else
    \State $\textrm{mode}={\tt abnormal}$
\EndIf
\Return $\textrm{mode}$
\end{algorithmic}
\end{algorithm}

\section{Mechanisms in SURFACE}\label{s:mec}

In this section, we explain the reason behind the designs introduced in Section~\ref{s:SURFACE} with context. We decompose the algorithm into several mechanisms and compare them to their counterparts in existing works, and clarify the similarities and differences, as well as our reason of choice.

\subsection{Round based leader selection}

With the practical asynchronous network assumption, we use a round based leader selection mechanism which has been widely used in existing works, in particular, proof-of-stake (POS) algorithms like \cite{snowwhite,ouroboros,algorand,dfinity}. These algorithms are different in three aspects:
\begin{itemize}
        \item {\bf Sole leader vs. Committee}: The single leader approach is a straightforward extension of Bitcoin. It is followed by POW based algorithms like \cite{ng} as well as POS based algorithms like \cite{snowwhite,ouroboros,praos}. These algorithms achieve Nakamote-like consensus that require several rounds to probabilistically confirm with no finality. In POW based algorithms like \cite{hybrid,byzcoin} and POS based algorithms like \cite{algorand,dfinity}, a random selected committee is used to achieve immediate finality with negligible fault probability. However, it introduces a higher message complexity due to the communication in the committee. Moreover, the fault tolerance ratio (the ratio of allowed adversaries in the total population) should be calculated carefully in order to guarantee that less than 1/3 of the committee are adversarial.
    \item {\bf The choice of random function}: Hash function is one of the most straightforward choice as a random oracle, which is used by \cite{snowwhite,ouroboros}. However, it suffers the disadvantage of predictability, which can be exploit by the adversaries. Then, VRF is used by \cite{algorand,praos} so that the role of a node is not known until himself revealing it with a proof. However, the number of leader or committee members is not definitive. In \cite{dfinity}, the BLS threshold signature is used for it is both deterministic and unpredictable.
    \item {\bf Random beacon}: A random beacon is required to generate a pseudo-random number, which must have reached consensus and could not be manipulated by the adversaries. An epoch based random beacon is used by most algorithms, where an epoch is a period that is sufficiently long for nodes to reach consensus on the beacon. Then, in order to prevent manipulation, \cite{snowwhite} uses the concatenation of randomness from many nodes so that adversaries could not manipulate all of them and in \cite{ouroboros,algorand}, the randomnesses are proposed with a commitment scheme to further prevent manipulation. Then, the beacon is used for an epoch so that it is impossible to successfully bias the selection of the whole period.
\end{itemize}

In SURFACE, we aim at a consensus algorithm that ``almost fork-free'' under normal situation in practical so that the bandwidth will not be wasted on transmitting blocks that are eventually discarded.
Hence, we choose a leader plus committee approach, which is similar to \cite{algorand}. The main difference in here is that in \cite{algorand} the size of the committee should be chosen sufficiently large so that the super majority (more than 2/3 of the population) of the committee are honest. However, in SURFACE, this is not a hard requirement as we only aim for ``almost fork-free'' but not absolute fork free. As a result, SURFACE could be seen as a generalization of leader based approach and committee based approach. It can achieve fast confirmation or even immediate finality if the committee size is large, and has a less message complexity but longer confirmation time if the committee size is small.

Second, for the random function, we use the straightforward hash mapping to select the leader and uses VRF to select the committee. The VRF committee selection will limit the capability of adversarial leaders to collude with the committee. Then, the hash mapping will guarantee that each round having exactly one leader, which will not result in empty rounds or inconsistency caused by multiple leaders in one round like in \cite{praos}. However, it does result in predictability of the leaders of the next epoch, which gives a chance of corruption attack. In our assumption, we assume the ``one-epoch ahead'' predictability is acceptable.

Third, we also uses the randomness from many nodes of the previous epoch to determine the random beacon. In particular, the random beacon is determined by the randomness created by VRF in the last finalized and received block in the previous epoch. It is guaranteed to be consistent for all nodes as it is finalized. Then, it could be manipulated with a non-negligible probability, which is the main different of SURFACE and \cite{algorand}. However, we will later show that we could still achieve probabilistic consistency even if the adversaries could manipulate the random seed.

\subsection{Optimizing throughput}
In SURFACE, many schemes are used to achieve an optimized throughput in practical use cases, especially in a industrial oriented consortium blockchain where the nodes are considered trusted in a certain degree.

\subsubsection{One round, one block}
The idea of ``one round, one block'' is the most straightforward approach of making a ``blockchain''. However, if we consider reaching consensus on messages rather than ``blocks'', then we have alternations like BFT algorithms and directed acyclic graphs (DAGs).

Classical BFT algorithms \cite{bracha,benor,castro} has $O(n^2)$ message complexity per consensus, where $n$ is the number of nodes in the network. Some more recent BFT algorithms could reduce the message complexity to $O(n)$ \cite{zyzzyva,byzcoin,700bft,miller,hotstuff}. The finality mechanism of SURFACE is inspired by HotStuff BFT \cite{hotstuff}, which also organizes the consensus into a chain of blocks. We will later clarify the similarities and differences between our algorithm and HotStuff BFT.

DAG based consensus algorithms allow multiple nodes to propose blocks simultaneously and eventually reach consensus on a graph instead of a chain \cite{tangle,spectre,phantom,conflux,avalanche,prism} \footnote{Note that these algorithms are different in many aspects with the only similarity of organizing data in the form of DAG.}. However, these algorithms are in general more complicated, especially to order transactions, and there are no clear evidence that they achieve higher throughput, lower latency, and/or have better bandwidth efficiency than the chain based approaches.

Another alternation is scale-out consensus algorithms like sharding approaches \cite{omniledger,chainspace,ren2} or off-chain approaches like \cite{lightning,plasma}. However, all these algorithms compromise in security, decentralization, or functionality, which are not suitable for our case.

Hence, we use a one-round one-block approach for our blockchain, which is also in line with many state-of-the-art consensus algorithms like \cite{ouroboros,algorand,grandpa,dfinity}.

\subsubsection{Decomposition of the blocks}
In our algorithm, the block is decomposed into multiple parts and the actual block is never broadcast together. The purpose of this design is to reduce communication redundancy. In particular, the transactions will only be broadcast once by the leader, instead of twice, i.e., first sent to the committee for validation, then broadcast with the complete block to the whole network.

There are other works that partially address this problem. In \cite{algorand,oguzhan}, the redundancy of the ``the transactions is broadcast twice, once before the block and once in the block'' is mitigated by only sending the hash indices of the transactions in the block.
\subsubsection{Delayed validation}

In our algorithm, the transaction set included in the block of round $r$ is actually not validated by the committee members of round $r$. It is validated by the leader and the committee members of round $r+1$ to decide whether to append blocks to it.



The reason behind this choice is to reduce the latency and wasted bandwidth caused by the validation of the transactions. If the committee validates the transaction of this round, then, the leader has to wait for the committee to validate the transactions and response with the endorsements, during which the bandwidth is wasted. The delayed validation design will allow the leader to fully utilize the bandwidth of a round for block transmission, while the validation could be done by the committee in the next round, while they are receiving the block of the next round. This is essentially optimistically trading the bandwidth wasted on waiting for the validation results for the possible bandwidth wasted on receiving invalid blocks, which will improve throughput in general as we assume that the network is mostly in normal situation and it is almost fork-free.

Similar approaches have also appeared in \cite{ncmax,grandpa}, however, in different forms and for different purposes.

\subsection{Finality}
To achieve finality, we include a finality vector in each block, which could be seen as a collection of new view messages, prepare messages, pre-commit messages, and commit messages in HotStuff BFT \cite{hotstuff}. In other words, a finality vector $F(B)$ included by a valid block $B$ is a collection of these consensus messages send by the leader and all endorsed committee members of block $B$. Then, as the chain grows, these messages will be collected to proceed a three-phase commit approach, i.e., $B$ will be proposed in a new view when $2f+1$ new view messages are received, be prepared when $2f+1$ prepare messages are received, and be pre-committed when $2f+1$ pre-commit messages are received, as shown in Figure~\ref{fig:normal}.
We use a view-change mechanism that is similar to HotStuff BFT, where nodes send pre-commit messages along with the new view messages, and a new is entered and a new block is only seen as proposed if it receives $2f+1$ new view messages without pre-commit messages for a conflicting block, which is an indication of no conflicting block that has been committed. On the other hand, if there exists a conflicting committed block $B$, there is at least one honest node that will send pre-commit message of $B$ along with the new view message and all nodes will eventually receive the chain of $B$ and pre-commit $B$.
This mechanism guarantees both liveness and consistency.

\begin{figure}
\centering
\includegraphics[width=0.95\textwidth]{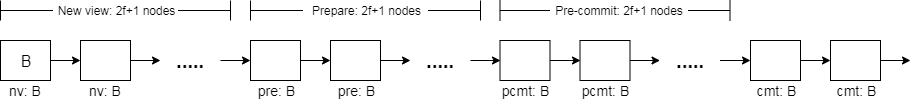}
\caption{Normal procedure of finality in SURFACE}
\label{fig:normal}
\end{figure}

However, although the finality mechanism in SURFACE uses a similar consensus process as HotStuff BFT, it is very different from HotStuff BFT by its nature. Firstly, as SURFACE is a chain based algorithm that uses a BFT based mechanism to achieve finality, it is crucial that the finality is chain compliant:
\begin{definition}[Chain compliant finality]
If a block $B$ is finalized by an honest node $u$, then another honest node $v$ could also finalize $B$ even if he has only received the chain $C(:B)$.
\end{definition}
Chain compliant finality suggests that all consensus messages required to commit a block should be included on the chain that it is committed. As a result, the consensus messages are subjective to their chains, which is not the case for normal BFT consensus algorithms.

The second difference is that honest nodes should follow canonical chain rules to only select and extend the heaviest chain if they have locally observed multiple candidate chains that are not finalized. Hence, honest nodes could vote in multiple QCs due to forks, which will lead to confusion. To address this problem, we let nodes always change their new view to a newer block as shown in Figure~\ref{fig:changeview} so that honest nodes will not simultaneously vote for multiple QC.

\begin{figure}
\centering
\includegraphics[width=0.95\textwidth]{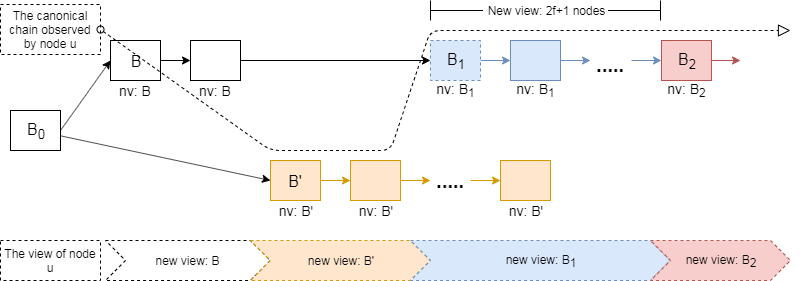}
\caption{Nodes will always provision a newer view on the canonical chain.}
\label{fig:changeview}
\end{figure}

In rare situations, a block with a heavier weight, denoted by $B'$, is not the block that is the closest to reach finality, denoted by $B$. In this case, the consensus process will continue on the chain of $B'$ until $B$ is committed, as shown in Figure~\ref{fig:pcmtchangeview}. Then, the chain of $B'$ will be discarded and nodes will resend the pre-commit messages of $B$ in the chain of $B$ for the sake of chain compliant finality. 

\begin{figure}
\centering
\includegraphics[width=0.95\textwidth]{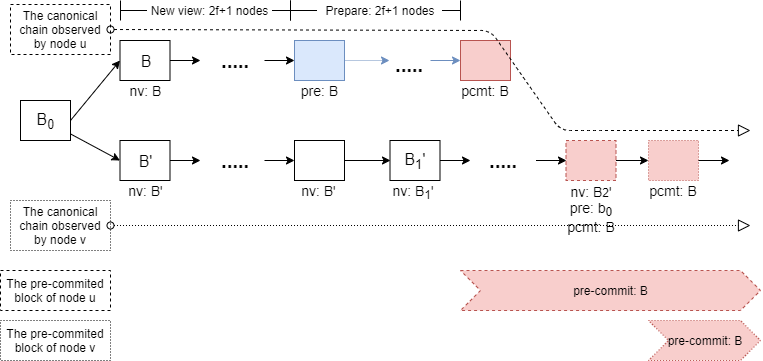}
\caption{Node $u$ switches to a chain when he has already prepared and pre-committed a conflicting block. Note that in here, node $u$ and $v$ will still considers the chain of $B'$ is the canonical chain until they commit $B$. Then, they will discard the chain of $B'$ and change to $B$.}
\label{fig:pcmtchangeview}
\end{figure}

In HotStuff BFT, if nodes are locked on different views, they will wait for the leader to send a high QC (the QC in the highest view), then unlock their current view and change their views according to the high QC. However, in SURFACE, since a message can only be sent if the leader and the committee both agree with it, the liveness will not be guaranteed if nodes prepare or pre-commit for many different blocks and only unlock their current view when $2f+1$ new messages are received. Hence, we let nodes first unlock their current view as soon as they discover another chain preparing a newer block and realize that their views are not the newest, then change to the view with the high QC when $2f+1$ messages are received. In Figure~\ref{fig:newpre}, we show that a node prepared for a block $B$ will unlock if he provisions to a conflicting view.
\begin{figure}
\centering
\includegraphics[width=0.95\textwidth]{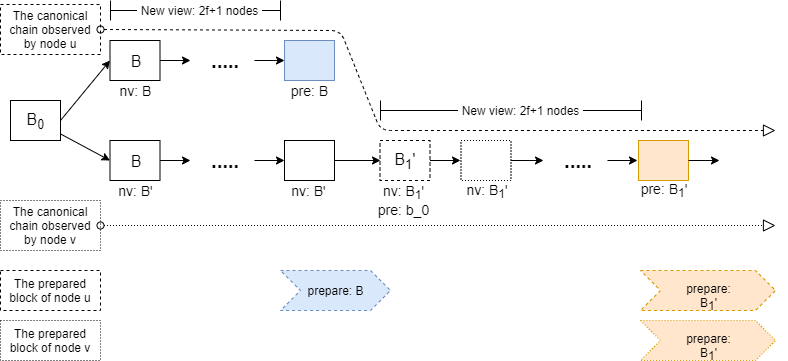}
\caption{Node $u$ unlocks its prepared block.}
\label{fig:newpre}
\end{figure}

\section{Performance Analysis}\label{s:pa}



SURFACE could achieve faster confirmation by letting more nodes endorse a block. We prove that a fork could only be created by an adversarial leader colluding with an adversarial committee in the normal situation. Here, an adversarial committee suggests that there are at least $d$ adverserial committee members. As a result, the blocks in SURFACE are confirmed faster comparing to the scenario that a sole leader is used.

\subsection{Forks}

A fork is defined as two chains $C_1$ and $C_2 \perp C_1$ which are returned from ${\tt Canonical Chain}({\cal B}(u_1,r),F(u_1,r-1))$ and ${\tt Canonical Chain}({\cal B}(u_2,r),F(u_2,r-1))$, where $u_1$ and $u_2$ are two honest nodes. Here, $C_2 \perp C_1$ is defined as there exists a block set ${\cal B}_1 \subset C_1, {\cal B}_1 \cap C_2 = \emptyset$ and a ${\cal B}_2 \subset C_2, {\cal B}_2 \cap C_1 = \emptyset$. 
We concern about the probability of existing a fork with depth $k$, where $k$ is a parameter for confirmation. Clearly, a fork of depth $k$ is able to be exploit to perform a double spending attack.

\subsection{Forks with colliding blocks}
Firstly, we consider the case in which there exists colliding blocks, i.e., there are two blocks that $B_1 \in {\cal B}_1$ and $B_2\in{\cal B}_2$ of the same round.

In this case, as the leader and committee are deterministic and fixed for each round given that the random beacons are consistent, the leader of round $r$ must be adversarial to create $B_1$ and $B_2$. Then, in order to create inconsistency, he sends these two blocks, in particular, the last part of the block, in the very end of round $r$. However, by our endorsement rule, the committee members will only endorse for a block summary after a period of $\Delta$ in round $r+1$. Then, by the gossip communication model in normal situation, both blocks will be received and be considered as suspicious by honest nodes and discarded. Hence, a $k$-depth fork with colliding block must be caused by $k$ consecutive adversarial leaders and committee members in synchronous scenario.

\subsection{Forks without colliding blocks}
There is another scenario in which the fork could be created and extended without colliding blocks. Let consider the following case:

At round $r+1$, the adversarial leader $l_{r+1}$ creates a block $B_{r+1}$ appending to the block in round $r$, denoted by $B_r$. He follows the normal procedure to broadcast the block summary, collect the endorsements, except for that he ``fraudulently delays'' the broadcast of the collected endorsements. As a result, $B_{r+1}$ is not actually broadcast, and thus not acquired by the rest of the network. Then, the adversarial leader of round $r+2$, $l_{r+2}$, creates a block appending to $B_r$. It will be endorsed by the honest committee as $B_r$ is the latest block that they observed. However, $l_{r+2}$ again ``fraudulently delays'' the broadcast of $B_{r+2}$. At the meantime, $l_{r+1}$ broadcasts $B_{r+1}$. Then, the adversarial leader of round $r+3$ will perform the same strategy to extend the chain of $B_{r+1}$. Further, the leader of round $r+4$ could also use the same strategy to extend $B_{r+2}$. Note that although the chain of $B_{r+2}$ has the same length as the chain of $B_{r+1}$, however, by our chain selection rule, $B_{r+2}$ is the latest and should be selected. So on and so forth, adversaries could create a $k$-depth fork with $2k$ adversarial leaders with honest committees.

However, this type of forks are addressed by our rules of honest blocks. In normal mode, the blocks which are received outside of its round will be considered as suspicious and will not be taken into account for canonical chain selection.


\subsection{False confirmation in abnormal situation}
In the abnormal situation, a fork with colliding blocks could be created with less than $d$ malicious committee members. More precisely, since a malicious leader could propose colliding blocks and if the network is partitioned and the messages between committee members are delayed, honest committee members could endorse for a block $B$ while the other honest committee members endorsing for a block $B'$. Then, as the committee size is not fixed, a fork could be created by a malicious leader colluding with $m<d$ malicious committee members. Additionally, there are $2(d-m)$ honest committee members are parted into two groups and all intermediate messages between these two groups are delayed.

However, as committees are selected by VRF and the network is not hostile by assumption, the adversaries could not predict who are the honest committee members and could only perform this type of attack by dividing the network before sending colliding blocks.
This attack will be countered by our mode switching mechanism in which honest nodes will detect a partition of the network with a same probability of a malicious committee. As a result, the probability of false confirmation decays at a same rate as the fork rate in the normal situation.



\subsection{Numerical analysis}\label{ss:na}

As discussed previously, a violation in consistency occurs with a $k$-depth fork, which requires at least $k$ consecutive malicious leaders and committees in the normal situation.
Then, in asynchronous but not hostile networks, it occurs when adversaries conduct a network partition attack and successfully created a fork by less than $d$ malicious nodes. At the meantime, the mode switching mechanism of two honest nodes in both parts of the network simultaneously failed to detect a network partition for $k$ consecutive rounds.

Moreover, we need to consider the possibility of biasness, i.e., if the beacon happens to be created by a malicious leader, then, he could exhaust approximately $b=\binom{c}{d}$ times to try to create a scenario of $k$ consecutive adversarial leaders and committees.


Then, we have 
\begin{align}
&   s_1 = \Big(\textrm{BC}(d,f,\frac{c}{n})\frac{f}{n}\Big)^k\binom{c}{d} \\
& \nonumber   s_2 = \textrm{BC}(\lceil \frac{c(n-f)}{n} \rceil , c ,\frac{n+f}{2n})^{2k}\binom{c}{d} \\
&\;\;\Big[ \sum_{d'=1}^{d} \Big( \textrm{BP}(d',f',\frac{c}{n}) \textrm{BC}(d-d',\lfloor \frac{n-d'}{2} \rfloor, \frac{c}{n})^2 \Big)\frac{f}{n} \Big]^k ,
\end{align}
where $s_1$ and $s_2$ are the probability of $k$-length fork occurs in normal and abnormal situation, respectively, taken into account the factor of biasness.
Here, the probability density function and the cumulative distribution function of binomial distribution of winning (at least) $x$ times in $n$ trials with probability $p$ are denoted by:
\begin{eqnarray}
    \textrm{BP}(x,n,p) &=& \binom{n}{x} p^{x}(1-p)^{n-x}, \\
    \textrm{BC}(x,n,p) &=& \sum_{x'=x}^{\infty} \binom{n}{x'} p^{x'}(1-p)^{n-x'},
\end{eqnarray}
respectively. In Table~\ref{tb:sp}, we give the security parameters in various network configurations.


\begin{table}
\centering
\begin{tabular}{|l|l|l|l|l|l|l|}
\hline
& Biasness & $s_1$ & $s_2$  \\ \hline
\begin{tabular}[c]{@{}l@{}}$c=10,d=7$\\ $k=7,f=33$\end{tabular} & 120  & $7.57\times 10^{-12}$  & $1.57\times 10^{-7}$ \\ \hline
\begin{tabular}[c]{@{}l@{}}$c=8,d=5$\\ $k=5,f=25$\end{tabular} & 56  & $8.14\times 10^{-9}$  & $4.12\times 10^{-12}$ \\ \hline
\begin{tabular}[c]{@{}l@{}}$c=6,d=4$\\ $k=4,f=20$\end{tabular} & 15  & $1.43\times 10^{-8}$  & $4.85\times 10^{-9}$ \\ \hline
\end{tabular}
\caption{The security parameters given in a network with $n=101$.}\label{tb:sp}
\end{table}

\begin{example}
We consider a network with 101 nodes, 33 adversaries, and a committee size of 10 with a requirement of 7 endorsements. Then, if an honest node sees a block $B$ appended by 7 blocks and he is in normal mode, he can confirm by knowing that the probability that there exists another honest node that confirms a conflicting block is smaller than $1.57 \times 10^{-7}$, regardless of whether the network situation is normal or not. If the block interval is 10 seconds, the confirmation time is 70 seconds, guaranteeing that an attack on the consistency could only happen approximately once per 2 years.
For comparison, to achieve the same level of security, a classical approach will requires $k=15$.

\end{example}

\subsection{Finality}

Besides a generalization of the Nakamoto-like consensus, SURFACE is also a generalization of BFT algorithms like \cite{algorand,hotstuff} with a flexible committee size. As a result, SURFACE could also achieve finality, i.e., the consistency condition in BFT consensus, in a similar fashion as \cite{hotstuff}.

\begin{theorem}[Finality]\label{th:cons}
In SURFACE, if an honest node $u$ has $\textrm{cm}(u,r)=B$, then, there could not be another honest node $u'$ that considers a conflicting chain is final, i.e., there cannot be another node $u'$ and a round $r'$ that has $\textrm{cm}(u',r')=B'$ and $B \perp B'$.
\end{theorem}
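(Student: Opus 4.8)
The plan is to adapt the classical quorum-intersection argument behind three-phase BFT protocols (HotStuff in particular) to the chain-compliant setting of SURFACE. Suppose for contradiction that some honest node $u$ has $\textrm{cm}(u,r)=B$ and some honest node $u'$ has $\textrm{cm}(u',r')=B'$ with $B\perp B'$. Tracing Algorithm~\ref{alg:final}, a block enters $\textrm{cm}(u)$ either because $N_{\textrm{pc}}(\textrm{nv}(B''),\textrm{pc}(B''),C)\ge 2f+1$ for the last block $B''$ on the canonical chain (a genuine ``commit quorum''), or because $N_{\textrm{cm}}(\cdot,\mathcal{B}(u,r))\ge f+1$ (a block already committed by at least one honest node). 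The $f+1$ case is a propagation case: it only copies a commit decision that some honest node made earlier via a genuine quorum, so by induction on rounds it suffices to derive the contradiction assuming both $B$ and $B'$ were committed through genuine $2f+1$ pre-commit quorums, say in views $V_B$ and $V_{B'}$ respectively. Without loss of generality let $V_B\preceq V_{B'}$.

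First I would establish the base-level intersection fact: a $2f+1$ pre-commit quorum for $B$ in view $V_B$ and a $2f+1$ pre-commit quorum for $B'$ in the same view cannot coexist, because the $N_{\textrm{pc}}(v,B',C')=0$ side condition in the prepare/pre-commit rules and the uniqueness of VRF-selected committee members force the $\ge f+1$ honest nodes in the overlap to have pre-committed both $B$ and $B'$ in that view, contradicting that an honest node pre-commits at most one block per view (this is exactly what lines 12--26 of Algorithm~\ref{alg:final}, together with the ``unlock'' rule, enforce). So $V_B\prec V_{B'}$ strictly. Next comes the inductive heart of the argument: I would show that once $B$ is committed in view $V_B$ with a $2f+1$ pre-commit quorum, then in every later view $V$ with $N_{\textrm{nv}}(V,C)\ge 2f+1$ there is a pre-commit message for $B$, i.e. $N_{\textrm{pc}}(V,B,C)>0$, and moreover no conflicting block $B''\perp B$ can accumulate a $2f+1$ prepare quorum with the required $N_{\textrm{pc}}(\cdot,B',C')=0$ side condition. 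The step is: among the $2f+1$ new-view messages forming view $V$, at least $f+1$ are honest, and at least one of those honest nodes belongs to the $2f+1$ pre-commit quorum of $B$ in $V_B$; that honest node is ``locked'' on $B$ (it has $\textrm{pc}(u,\cdot)=B$ with the recorded $\hat B_1 = \textrm{nv}$ at lock time equal to $V_B$), and by the pre-commit rule it will only unlock and pre-commit a conflicting $B''$ if it sees a view $V|_{B_2'}[B_2'\succ V_B, N_{\textrm{pc}}(B_2',B,\cdot)=0]$ — which by the induction hypothesis applied to the smaller gap $V_B\prec B_2'\prec V$ cannot exist. Hence that honest node carries $B$'s pre-commit forward into $V$, which both witnesses $N_{\textrm{pc}}(V,B,C)>0$ and blocks any conflicting $B''$ from satisfying the prepare side condition in $V$. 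Feeding this all the way up to $V_{B'}$ contradicts that $B'$ (conflicting with $B$) got a $2f+1$ pre-commit quorum in $V_{B'}$, since such a quorum requires $B'$ to have been prepared, which requires a view with $N_{\textrm{pc}}(\cdot,B,\cdot)=0$ for the conflicting $B$.

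The step I expect to be the main obstacle is making the induction on views genuinely watertight given the chain-compliance complications that are absent in HotStuff: here the counters $N_{\textrm{tp}}(\cdot,\cdot,C)$ are relative to a chain $C$, two honest nodes may be looking at different canonical chains during forks, and the ``unlock $\textrm{pc}(u)$'' rule (lines 27--29) can reset a lock when $B_{\textrm{RTPC}}$ changes. I would handle this by first proving a lemma that a genuine commit of $B$ implies $B$ is on every honest node's finalized prefix (so $B$ never gets unlocked away by any honest node once committed — this is where $\textrm{fn}(u)\gets\textrm{cm}(u)$ and the canonical-chain rule ``$C$ includes $\textrm{fn}(u,r)$'' do the work), and separately proving that a $2f+1$ pre-commit quorum for $B$ in $V_B$ is visible (as the same set of pre-commit messages on the sub-chain $C(:B)$) to any node that has received $C(:B)$, so the relativity of the counters does not lose the quorum. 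With those two lemmas the HotStuff-style view induction goes through; I would also need to separately dispatch the biased-beacon subtlety — but since the theorem is a pure safety statement quantified over actual executions, the adversary's ability to grind the beacon only affects which committees appear, not the quorum-intersection arithmetic, so it does not enter the proof.
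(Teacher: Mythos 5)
Your proposal is correct and rests on the same two pillars as the paper's own proof: quorum intersection between the $2f+1$ pre-commit quorums for $B$ and $B'$, and the fact that an honest quorum member can only abandon its pre-commit of $B$ if it has seen a later view carrying $2f+1$ nv messages with no pc message for $B$. The difference is one of direction. You run a forward induction over views (every view after $V_B$ with a $2f+1$ nv quorum carries a pc message for $B$, hence no conflicting block can satisfy the prepare/pre-commit side conditions), whereas the paper unrolls the same dependency backwards: from the intersection node $v$ it extracts a justifying view $B_1''$, intersects again to find a node $v'$ that must have unlocked even earlier, and descends until the finiteness of the rounds between $B_1$ and $B_1'$ forces some honest node to have unlocked without meeting the unlock condition. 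The two arguments are logically equivalent, but your packaging adds two things the paper leaves implicit: an explicit reduction of the $f+1$ cm-message commit path to the genuine $2f+1$ pre-commit quorum case, and explicit attention to the chain-relativity of the counters $N_{\textrm{tp}}(\cdot,\cdot,C)$, which is indeed where a fully rigorous version needs the most care. One caveat: your first auxiliary lemma (a genuine commit of $B$ puts $B$ on every honest node's finalized prefix) is too strong and is false under asynchrony, since an honest node that never receives the chain carrying the commit quorum has no reason to finalize $B$; fortunately your main induction does not need it, because it only concerns the quorum members themselves, and their inability to endorse a justification-free conflicting view is exactly what the pre-commit and unlock rules, together with your second (quorum-visibility) lemma, already provide.
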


\begin{proof}
Assume that there exists two honest nodes $u$ and $u'$ that have $\textrm{cm}(u,r)=B$ and $\textrm{cm}(u',r')=B', B \perp B'$. By the commit rule, there must exist a chain $C$ that has $2f+1$ nv messages and $2f+1$ pp messages. Moreover, we assume there are $2f+1$ pc messages for $B$ in a view $B_1$, i.e., coming along with the nv messages for $B_1$. Similarly, there must exist a chain $C'$ that has $2f+1$ pc messages for $B'$ in a view $B_1'$. Hence, there must be a node $v$ that sends a pc message for $B$ at view $B_1$ and sends a pc message for $B'$ at view $B_1'$. W.l.o.g. we assume that $B_1' \succ B_1$.

By the nv rules, $v$ must send pc messages for $B$ before sending pc message to $B'$ since he cannot change from view $B_1'$ to a older view $B_1$. Then, as he has send a pc message for $B$, he can only send the pc message for $B' \perp B$ according to the second pc rule.
Hence, there must exist a chain $C''$ (could be the same as $C'$) that has $2f+1$ nv messages for a block $B_1'' \succ B_1$ with no pc message for $B$. Then, node $v$ observes the view $B_1''$ at round $r_0$ and afterwards pre-commits $B$. 

Then, there must be a node $v'$ that sends both the pc message for $B$ in the view $B_1$ and the nv message of $B_1''$ without pc message for $B$. Here, as $B_1'' \succ B_1$, by the new view rule, $v'$ must first send a pc message for $B$ then sends a nv message at view $B_1''$ without pc message for $B$, which suggests that he has unlocked $B$.
Then, according to the unlock rule, there must exist a chain $C'''$ that has $2f+1$ nv messages for a block $B_1''' \succ B_1$ with no pc message for $B$. Moreover, $v'$ must have observed this and unlocked $B$ at a round $r_0' < r_0$ so that $v$ could make the later observation and unlocking.

As there are finite rounds between $B_1$ and $B_1'$, there must exists a node that sends pc message in the view $B_1$ but unlocks $B$ without meeting the condition of unlocking pre-committed blocks.


\begin{figure}
    \centering
    \includegraphics[width=0.95\textwidth]{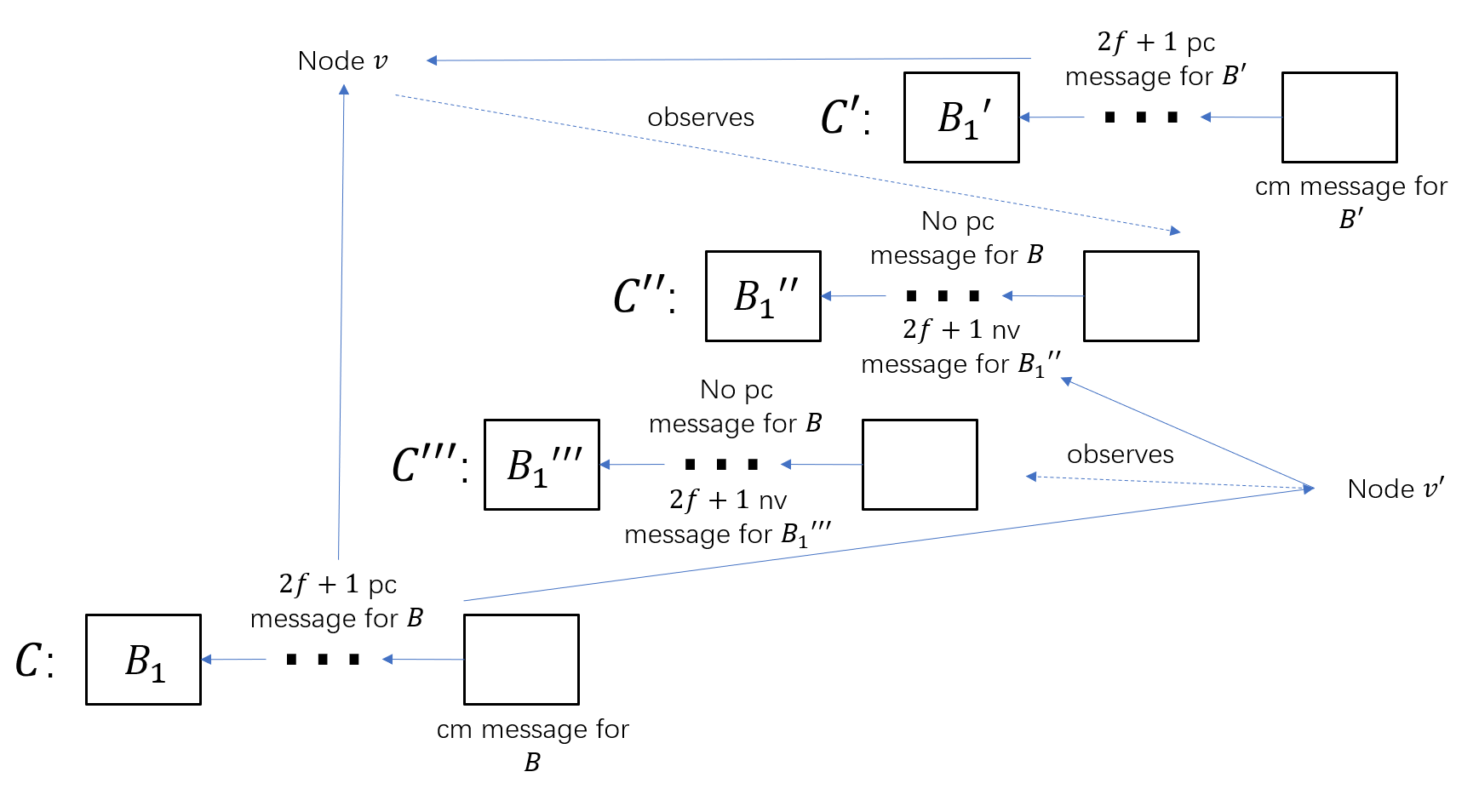}
    \caption{Illustration for proving Theorem~\ref{th:cons}}
    \label{fig:th_1}
\end{figure}

\end{proof}

\subsection{Synchronous liveness}

Firstly, we prove the following lemma showing that the system could enter a new view in ``normal'' situations.

\begin{lemma}\label{lm:chaingrow}
In the normal situation, new blocks can be proposed in a new view regardless of the finality vectors of the nodes.
\end{lemma}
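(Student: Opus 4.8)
The plan is to establish two things for the normal situation: (i) that the block-production loop of Algorithm~\ref{alg:np} completes every round, so a valid block keeps being appended; and (ii) that the new-view pointer $\textrm{nv}$ advances and then locks in, so some block repeatedly gathers $N_\textrm{nv}\ge 2f+1$ and is provisioned as a new view. The force of the phrase ``regardless of the finality vectors'' is that neither of these may depend on the locking state carried in the $\textrm{pp},\textrm{pc},\textrm{cm},\textrm{fn}$ components of $F$ --- this is precisely the liveness hazard inherited from HotStuff that the view-change and unlocking rules of Algorithm~\ref{alg:final} are designed to remove.

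First I would record what the normal situation supplies. By the synchronous assumption every message reaches all honest nodes within $\Delta$; by the gossip assumption the adversary cannot split the honest view; and by the liveness-incentive assumption the round-$r$ leader proposes and the committee responds honestly. Combining these with the honest-block rules (exactly as argued for the two fork scenarios in Section~\ref{s:pa}), all honest nodes share the same honest-block set, and hence, by the deterministic rules of ${\tt CandidateBlocks}$ and ${\tt CanonicalChain}$, the same canonical chain $C(u,r)$ in every round. Since Algorithm~\ref{alg:final} is a deterministic function of $C(u,r)$, ${\cal B}(u,r)$, and the node's previous vector, I would then show by induction on $r$ that all honest nodes hold the same pruned finality vector: the commit pointer only advances to the chain-determined value, while the ``Rule for unlocking $\textrm{pc}(u)$'' and the prepare-unlock step (line 32) erase any history-dependent lock that is inconsistent with the common chain, so within a bounded number of rounds the history-dependent parts are overwritten and the vectors coincide. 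Consequently every honest committee member's ${\tt ValBlockSum}$ check $F(u,r)=F'$ passes, at least $d$ of the $c$ expected honest committee members endorse, and the leader assembles $\hat{\cal E}(B(u,r))$ and broadcasts a valid block --- so block proposal proceeds every round irrespective of the locking state the nodes started from.

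Next I would show that a new view is entered. The crucial structural observation is that the three new-view rules update $\textrm{nv}(u)$ using only $\textrm{nv}(B)$, the $\prec/\succeq/\perp$ relation, and the count $N_\textrm{nv}(\cdot,C)$; none of them reads $\textrm{pp},\textrm{pc},\textrm{cm}$, so nv-progress is fully decoupled from the locking components. Given a common canonical chain that grows by one valid block per round, rule~1 forces every honest node to move $\textrm{nv}(u,r)$ to the nv-pointer carried by the latest block, so successive blocks on $C$ repeatedly write the same $\textrm{nv}$ value. Each such block contributes up to $d+1$ distinct nv-messages to $N_\textrm{nv}(\cdot,C)$, so after a bounded number of consecutive normal rounds the count reaches $2f+1$; the block then satisfies the $V|_B$ condition and, by new-view rule~3, is provisioned as a new view by all honest nodes.

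The main obstacle is the inductive convergence of the finality vectors in the second paragraph: I must show that the unlocking rules genuinely dissolve every conflicting $\textrm{pc}/\textrm{pp}$ lock that honest nodes may carry out of an abnormal period, so that the $F(u,r)=F'$ endorsement check becomes satisfiable, and then bound the number of normal rounds before all honest vectors agree. Care is needed because the $\textrm{pc}$ update references the node's own pc-sending history ($\hat B,\hat B_1$); I would argue that once the common chain fixes $B_\textrm{RTPC}$, the unlock rule forces $\textrm{pc}(u)$ to either $B_\textrm{RTPC}$ or $b_0$ identically for all honest nodes, removing the history dependence. Establishing this cleanly, rather than the subsequent counting of nv-messages, is where the real work lies.
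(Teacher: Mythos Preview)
Your plan hinges on showing that, once the normal situation restores a common canonical chain and a common ${\cal B}(u,r)$, all honest nodes' pruned finality vectors coincide so that the ${\tt ValBlockSum}$ check $F(u,r)=F'$ passes and a block is produced every round. This convergence step is precisely where the argument breaks. As you yourself note, the pre-commit update references the node's own sending history through $\hat B$ and $\hat B_1$; running Algorithm~\ref{alg:final} on the same ${\cal B}$ and $C$ does force $\textrm{pc}(u)\in\{B_\textrm{RTPC},b_0\}$, but which of the two values a node lands on is determined by whether the history-dependent conditions on $\hat B,\hat B_1$ are met, and different honest nodes can legitimately land on different values. So the vectors need not coincide, ${\tt ValBlockSum}$ need not pass for every honest committee member, and your ``one valid block per round'' conclusion does not follow. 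There is also a circularity risk: you appeal to chain growth to drive the unlocking that produces convergence, but chain growth in your argument already presupposes convergence.

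The paper's proof takes a different route that sidesteps convergence entirely. It accepts that honest nodes may be split into two groups, $\Omega$ with $\textrm{pc}(u)=B_0$ and $\Theta$ with $\textrm{pc}(u)=b_0$, and then argues \emph{probabilistically}: since leader and committee are sampled uniformly, with probability at least $\textrm{BC}(d,c,m+f)+\textrm{BC}(d,c,n-m)\ge 2\,\textrm{BC}(d,c,n/2)$ the leader and $d$ endorsers all belong to the same group, agree on the finality vector, and successfully produce a block. Blocks therefore appear with positive probability each round, nv-messages from independently sampled committees accumulate, and eventually some block collects $2f+1$ nv-messages and becomes a new view. The missing idea in your proposal is this ``same-group leader plus committee'' probabilistic step; without it, the deterministic convergence you are aiming for is not available.
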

\begin{proof}
By our assumption, it is straightforward that blocks will be proposed in a new view if the network is always in normal situation. We focus on the scenario that nodes start with different finality vectors, in particular, locked with different prepare and pre-commit blocks.
We show that there will eventually be large groups of nodes having consistent finality vectors if they have the same ${\cal B}(u,r)$ so that the leaders and committees could reach agreement and blocks could be proposed with non-zero probability.
Let us assume that honest nodes are pre-committing to a set of blocks ${\cal B}^*$ at round $r$. Moreover, by our assumption of the normal situation, all node should have a consistent ${\cal B}(u,r)$ and canonical chain $C$.


By the pre-committing rule, for each block $B_i \in {\cal B}^*$, there must exists a block $B_i'$ such that there exists a chain $C_i$ in which there are $2f+1$ nv messages for $B_i'$ coming along with $2f+1$ pp messages for $B_i$. Moreover, there are no pc messages for any $B_j \perp B_i$ coming along with these pp messages. We use the notation ${\cal B}'$ for the set for all $B_i'$, $B_0'$ for the newest block in ${\cal B}'$, $B_0$ for the block that receives $2f+1$ pp messages in the view $B_0'$, and $C_0$ for the chain that includes $B_0$, $B_0'$, and these $2f+1$ pp messages. 

Now, we discuss two cases: first, if there exists a view $B_0'' \succ B_0'$ with no pc message for $B_0$, then $B_0$ is not ``ready to pre-commit'' according to ${\cal B}(u,r)$ and will be unlocked by all nodes. Then, the process will continue with all nodes preparing and pre-committing new blocks. Second, if there dose not exist a view $B_0'' \succ B_0'$ with no pc message for $B_0$, the $B_0$ is ``ready to pre-commit'' according to ${\cal B}(u,r)$ by definition.

Then, by the pre-commit rule, all nodes that have send pc messages of $B_i$ before the view $B_0'$ will pre-commit $B_0$. On the other hand, all nodes that have send pc messages of $B_i$ after the view $B_0'$ will unlock their pre-committing blocks, but not yet pre-commit $B_0$. We denote the group of nodes that have $\textrm{pc}(u,r)=B_0$ by $\Omega$ and the group of nodes that have $\textrm{pc}(u,r)=b_0$ by $\Theta$.

Assume that there are $m$ honest nodes in group $\Omega$ and and $2f+1-m$ honest nodes in group $\Theta$. Then, in normal situation, with a probability of $\textrm{BC}(d,c,m+f)$ that a new block will be created by group $\Omega$ with a pre-commit message of $B_0$ and with a probability of $\textrm{BC}(d,c,n-m)$ that a new block will be created by group $\Theta$ with no pre-commit message. As a result, in the worst case, there is a probability of $2\textrm{BC}(d,c,n/2)$ that a block could be created. 
As the committee members of each block are independently and uniformly selected, nv messages of a view $B_2$ from $2f+1$ different nodes will be eventually received and all nodes will either unlock $B_0$ and prepare $B_2$ or all honest nodes will pre-commit $B_0$.
\end{proof}



With Lemma~\ref{lm:chaingrow}, we have the following theorem.

\begin{theorem}[Liveness]\label{lm:sl}
In normal network situation, new blocks could be committed.
\end{theorem}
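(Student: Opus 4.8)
The plan is to bootstrap from Lemma~\ref{lm:chaingrow} and show that, once the network is in a normal situation, the three-phase commit of SURFACE advances deterministically enough that some block eventually gets $2f+1$ pc messages in a single view, which by the first commit rule is exactly a commit. First I would observe that in the normal situation all honest nodes share the same candidate set $\mathcal{B}(u,r)$ and the same canonical chain $C(u,r)$ (this is the synchronous/gossip assumption already invoked in the lemma), so the only obstruction to progress is divergence in the finality vectors — nodes locked on different prepare/pre-commit blocks. Lemma~\ref{lm:chaingrow} already tells us that a new block can be proposed in a new view regardless of these locks; I would iterate that lemma to get a descending ``funnel'': each time a block is proposed in a fresh view $B_0'$, the set of blocks honest nodes are pre-committing either shrinks (some $B_0$ gets unlocked because a later view omits its pc message) or stabilizes around a single $B_0$ that is ``ready to pre-commit.''

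The key steps, in order: (1) Fix the start of a sufficiently long normal period and let $C$ be the common canonical chain; invoke Lemma~\ref{lm:chaingrow} to conclude blocks keep being proposed in new views with non-zero probability each round. (2) Track the set $\mathcal{B}^*$ of blocks currently pre-committed by honest nodes; using the pre-commit and unlock rules of Algorithm~\ref{alg:final}, argue $\mathcal{B}^*$ is non-increasing in the appropriate order once no conflicting blocks are being freshly proposed (adversaries cannot propose conflicting blocks in the normal situation unless they control the leader, and even then gossip/honest-block rules discard colliding blocks as shown in the Forks subsection). (3) Show that within a bounded number of rounds all honest nodes converge to a common $\textrm{pc}(u,r)=B_0$ for the ``ready to pre-commit'' block $B_0$: nodes in group $\Theta$ eventually see the view $B_0'$ with no conflicting pc message and pre-commit $B_0$, so $|\Omega|$ grows to $2f+1$. (4) Once $2f+1$ honest nodes are pre-committing $B_0$, since committee membership per round is independent and uniform, within expected $O(1/\textrm{BC}(d,c,\cdot))$ rounds a leader from $\Omega$ together with $d$ committee members from $\Omega$ produces a block carrying pc messages for $B_0$ from $2f+1$ distinct nodes in one view — this triggers $N_{\textrm{pc}}(\textrm{nv}(B),\textrm{pc}(B),C)\ge 2f+1$, so $\textrm{cm}(u)\gets B_0$ and then $\textrm{fn}(u)\gets B_0$. (5) Conclude $B_0$ is committed, and since transactions in newer blocks keep getting appended, liveness (new transactions are confirmed, here in the stronger sense of committed) follows.

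I expect the main obstacle to be step (3): proving the finality vectors actually converge rather than oscillate. The unlock rules in Algorithm~\ref{alg:final} are subtle — a node unlocks $\textrm{pc}(u)$ when its block is ``no longer ready to pre-commit,'' but whether a block is ready depends on the existence of later views with or without certain pc messages, and in a normal situation new views keep appearing. I would need a carefully chosen potential function — something like the round number of the newest view $B_0'$ witnessing a $2f+1$-pp block, combined with the size of $\mathcal{B}^*$ — and argue it cannot decrease forever, so eventually the ``second case'' of the case split in Lemma~\ref{lm:chaingrow} (a genuinely ready-to-pre-commit $B_0$ with no conflicting later view) must occur and persist. The probabilistic bookkeeping in step (4) — turning the per-round success probabilities $\textrm{BC}(d,c,m+f)$ and $\textrm{BC}(d,c,n-m)$ into an expected-time bound, and handling the worst case $2\,\textrm{BC}(d,c,n/2)$ — is routine geometric-trials analysis once the convergence in step (3) is nailed down, so I would state it briefly and not grind through it.
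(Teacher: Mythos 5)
Your route is essentially the paper's: the published proof is only a few sentences long --- cite Lemma~\ref{lm:chaingrow} to get that all nodes keep entering newer views, note that in the normal situation all nodes share a consistent canonical chain, and conclude that the proposed block collects $2f+1$ prepare and pre-commit messages and is eventually committed. Most of the machinery in your steps (2)--(4) (the groups $\Omega$ and $\Theta$, the $\textrm{BC}(d,c,\cdot)$ success probabilities, the convergence of pre-commits onto a single ``ready to pre-commit'' block) is material the paper places inside the proof of Lemma~\ref{lm:chaingrow} rather than inside the theorem, so you are not doing anything genuinely different, only redistributing the work. The oscillation worry you flag in step (3) is legitimate, but the paper does not resolve it any more rigorously than you do; it is absorbed into Lemma~\ref{lm:chaingrow} together with the normal-situation assumptions, so you should not expect a potential-function argument in the source.

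The one ingredient of the paper's proof that your sketch omits: a colluding leader and committee can produce a perfectly valid, non-colliding block whose finality vector carries a pre-commit message for an arbitrary conflicting block $B'$. Since the prepare rule requires $N_{\textrm{pc}}(\textrm{nv}(B),B',C)=0$ for every $B'\perp\textrm{nv}(B)$, such blocks can stall exactly the prepare/pre-commit progress your steps (3)--(4) rely on. Your step (2) only rules out forked or colliding blocks via the honest-block rules, which does not cover this tactic, because the offending block is itself on the canonical chain and perfectly well-formed. The paper handles it by appealing to the normal-situation assumption that a colluding leader-plus-committee occurs with negligible probability during the consensus process; your argument needs the same explicit appeal to go through.
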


\begin{proof}

By Lemma~\ref{lm:chaingrow}, all node will enter newer views. Then, in the normal situation, all node should be able to have a consistent canonical chain. Hence, by our algorithm, the proposed block would be able to collect $2f+1$ prepare messages, $2f+1$ pre-commit messages, and eventually be committed.





Note that leaders and colluding committees could try to propose blocks $B$ with a pre-commit message for arbitrary $B'$ in a view to delay the process. However, as we always assume that the probability of a colluding leader and committee is very low (very unlikely to happen during the consensus process) in the normal situation, we guarantee that new blocks will eventually be committed in the normal situation.






\end{proof}




\section{Comparison to other algorithms}\label{s:comp}

In general, one of the major difference between SURFACE and all other algorithms are the design of two modes, which distinguishes SURFACE from most blockchain consensus algorithms. The most similar ones are the Nakamoto consensus algorithms with finality, like Casper \cite{casper} and GRANDPA-BABE \cite{polkadots,grandpa}. Besides, we will also compare SURFACE to BFT algorithms like Algorand, HotStuff BFT, Tendermint, and Nakamoto consensus algorithms like Ouroboros.

\subsection{Casper and GRANDPA-BABE}
Casper \cite{casper} is a consensus algorithm that is designed to promote public blockchains like Ethereum with finality. Polkadots \cite{polkadots} uses a consensus scheme called GRANDPA-BABE \cite{grandpa}, which uses the finality arguments of Casper in a permissioned blockchain. Firstly, BABE is an algorithm that is similar to Ouroboros-Praos \cite{praos} that uses a VRF based approach to determine the block proposer of each round. Then, GRANDPA is used to allow nodes to spontaneously vote for the blocks and uses BFT arguments to reach finality.

As far as we know, GRANDPA-BABE is independently developed and is the most similar consensus algorithm to SURFACE in sense that it also uses a VRF based approach to guarantee an ever-growing chain and achieve finality upon that with BFT arguments. However, SURFACE and GRANDPA-BABE are different in the following aspects:

\begin{itemize}
    \item In a certain sense, the voting mechanism of GRANDPA is equivalent to a dynamic-size committee for each block. Hence, the performance of GRANDPA-BABE will differ from SURFACE depending on the network situation. In general, SURFACE will give a more stable confirmation time and a lower fork rate, while the performance of GRANDPA-BABE will depends on how motivated nodes are for voting. On the other hand, GRANDPA-BABE has not yet introduce an incentive scheme for the voters.
    \item GRANDPA does not have optimal responsiveness as stated in \cite{hotstuff}. In other words, in extreme scenario, honest nodes will have to wait for the maximum delay $\delta$ to make progress, which is not the case in SURFACE.
\end{itemize}

\subsection{HotStuff BFT}

The ``abnormal'' mode in SURFACE uses a similar but not identical approach as HotStuff BFT \cite{hotstuff} to achieve consensus. However, it is explained in Subsection~\ref{ss:final} that the finality mechanism used in SURFACE is different from HotStuff BFT in many aspects, e.g., chain compliant, unlocking pre-commit blocks.


Another difference is that SURFACE functions in the practical asynchronous network and HotStuff BFT functions in partial synchronous network, which is weaker assumption. However, although HotStuff BFT, as well as PBFT, could be straightforwardly use in the practical asynchronous network, the view changing time-out should be modified to a constant, otherwise they will suffer from a very long latency in the normal situation after a long period of abnormal situation.


The main advantage of SURFACE over HotStuff BFT is the random sample of committee allows a faster probabilistic confirmation speed in large networks in the ``normal'' mode. In HotStuff BFT, the block interval needs to be sufficient for $2f+1$ nodes to respond. However, in SURFACE, the block interval could be set smaller as the leader only needs to wait for the committee to respond.

\subsection{Algorand}
Algorand \cite{algorand} is similar to the ``normal'' mode of SURFACE in many aspects. However, Algorand achieves provable BFT with the leader and committee selected in each round, while SURFACE aims to only reduce the probability of forks. Note that there is a trade-off between the committee size and the fault tolerance in Algorand: in order to guarantee that the number of the adversaries in each committee is less than 1/3 by the law of the large number, the size of the committee and the ratio of adversaries in the network should be set accordingly. In SURFACE, this is not a concern as we allow the committee to be malicious and to create forks.

\subsection{Tendermint}

Tendermint \cite{tendermint} uses a two-phase BFT consensus approach, in which the process of committing a block is identical to PBFT. Then, in order to guarantee liveness, a node that has pre-committed for a block will unlock and prepare for a new block only if {\bf BOTH} of the two condition holds: 1) a block proposed in a newer view has received $2f+1$ prepare messages; 2) a certain timeout is reached. This approach is lack of responsiveness comparing to PBFT and HotStuff as suggested in \cite{hotstuff}, as a view change will only happen after the timeout, even if there are already $2f+1$ votes for a new view.

In our model, as we incorporate with the possibility of asynchronous and a random committee selection mechanism, a rather large timeout should be used to guarantee that there will be enough blocks to commit a block before timeout in case that nodes are locked on inconsistent blocks at the beginning of a consensus round. As a result, we choose to use a similar three-phase approach as HotStuff BFT to have responsiveness, i.e., the ability to make progress on the BFT process without needing to wait for any preset timeout period for view changes.




\section{Conclusion}\label{s:conc}

In this paper, we present SURFACE, a blockchain consensus algorithm that is especially designed and optimized for large real-world blockchains. The main reason behind the proposal of blockchain is the observation that in real-world, we tend to use double standards on the consensus algorithms used in blockchains. On one hand, it is commonly believed that synchronous consensus algorithms are not sufficient and suitable for blockchains. On the other hand, most blockchains function in highly synchronous networks and Bitcoin's POW actually has a very high requirement of synchrony. As a result, the proposed blockchain consensus algorithms are either theoretically sound for asynchronous case but not optimized in practice, or achieve sky-high performance in laboratory environments but vulnerable in extreme situations. Hence, we take both perspectives into account and put forth the practical asynchronous network model. We then propose SURFACE, which will give a near-optimal performance in the normal situation but still be able to reach definitive consistency in the extreme situations. Certainly, the drawback of SURFACE is that it will have sub-optimal performance if the network is different from our assumptions, e.g., the network shifts between multiple situations or partitions or attacks in network becomes a new ``normal'' situation for various reasons.
However, we believe SURFACE does fit the scenarios of most real-world blockchains and could provide a reasonably good performance for general cases.

\bibliographystyle{splncs04}
\bibliography{Implicit_Consensus}

\begin{thebibliography}{10}
\providecommand{\url}[1]{\texttt{#1}}
\providecommand{\urlprefix}{URL }
\providecommand{\doi}[1]{https://doi.org/#1}

\bibitem{chainspace}
Al{-}Bassam, M., Sonnino, A., Bano, S., Hrycyszyn, D., Danezis, G.: Chainspace:
  {A} sharded smart contracts platform. CoRR  \textbf{abs/1708.03778} (2017),
  \url{http://arxiv.org/abs/1708.03778}

\bibitem{redballoon}
Babaioff, M., Dobzinski, S., Oren, S., Zohar, A.: On bitcoin and red balloons.
  In: Proceedings of the 13th ACM conference on electronic commerce. pp.
  56--73. ACM (2012)

\bibitem{prism}
Bagaria, V., Kannan, S., Tse, D., Fanti, G., Viswanath, P.: Prism:
  Deconstructing the blockchain to approach physical limits. In: Proceedings of
  the 2019 ACM SIGSAC Conference on Computer and Communications Security. pp.
  585--602 (2019)

\bibitem{benor}
Ben-Or, M., Kelmer, B., Rabin, T.: Asynchronous secure computations with
  optimal resilience. In: Proceedings of the thirteenth annual ACM symposium on
  Principles of distributed computing. pp. 183--192. ACM (1994)

\bibitem{snowwhite}
Bentov, I., Pass, R., Shi, E.: Snow white: Provably secure proofs of stake.
  IACR Cryptology ePrint Archive  \textbf{2016}, ~919 (2016)

\bibitem{bracha}
Bracha, G.: Asynchronous byzantine agreement protocols. Information and
  Computation  \textbf{75}(2),  130--143 (1987)

\bibitem{casper}
Buterin, V., Griffith, V.: Casper the friendly finality gadget. arXiv preprint
  arXiv:1710.09437  (2017)

\bibitem{castro}
Castro, M., Liskov, B.: Practical byzantine fault tolerance. In: OSDI. vol.~99,
  pp. 173--186 (1999)

\bibitem{croman}
Croman, K., Decker, C., Eyal, I., Gencer, A.E., Juels, A., Kosba, A., Miller,
  A., Saxena, P., Shi, E., Sirer, E.G., et~al.: On scaling decentralized
  blockchains. In: International Conference on Financial Cryptography and Data
  Security. pp. 106--125. Springer (2016)

\bibitem{praos}
David, B., Ga{\v{z}}i, P., Kiayias, A., Russell, A.: Ouroboros praos: An
  adaptively-secure, semi-synchronous proof-of-stake blockchain. In: Annual
  International Conference on the Theory and Applications of Cryptographic
  Techniques. pp. 66--98. Springer (2018)

\bibitem{wattenhofer}
Decker, C., Wattenhofer, R.: Information propagation in the bitcoin network.
  In: IEEE P2P 2013 Proceedings. pp. 1--10. IEEE (2013)

\bibitem{eos}
EOS: \url{https://eos.io/}

\bibitem{oguzhan}
Ersoy, O., Ren, Z., Erkin, Z., Lagendijk, R.L.: Transaction propagation on
  permissionless blockchains: incentive and routing mechanisms. In: 2018 Crypto
  Valley Conference on Blockchain Technology (CVCBT). pp. 20--30. IEEE (2018)

\bibitem{minerdilemma}
Eyal, I.: The miner's dilemma. In: 2015 IEEE Symposium on Security and Privacy.
  pp. 89--103. IEEE (2015)

\bibitem{ng}
Eyal, I., Gencer, A.E., Sirer, E.G., Van~Renesse, R.: Bitcoin-{NG}: A scalable
  blockchain protocol. In: 13th USENIX Symposium on Networked Systems Design
  and Implementation (NSDI 16). pp. 45--59. USENIX Association (2016)

\bibitem{majority}
Eyal, I., Sirer, E.G.: Majority is not enough: Bitcoin mining is vulnerable.
  In: International Conference on Financial Cryptography and Data Security. pp.
  436--454. Springer (2014)

\bibitem{rational}
Ford, B., B{\"o}hme, R.: Rationality is self-defeating in permissionless
  systems. arXiv preprint arXiv:1910.08820  (2019)

\bibitem{grandpa}
Foundation, W..: Byzantine finality gadgets.
  \url{https://github.com/w3f/consensus/blob/master/pdf/grandpa.pdf} (2019)

\bibitem{garay}
Garay, J., Kiayias, A., Leonardos, N.: The Bitcoin Backbone Protocol: Analysis
  and Applications, pp. 281--310. Springer Berlin Heidelberg, Berlin,
  Heidelberg (2015). \doi{10.1007/978-3-662-46803-6-10}

\bibitem{algorand}
Gilad, Y., Hemo, R., Micali, S., Vlachos, G., Zeldovich, N.: Algorand: Scaling
  byzantine agreements for cryptocurrencies. In: Proceedings of the 26th
  Symposium on Operating Systems Principles. pp. 51--68. ACM (2017)

\bibitem{vrf}
Goldberg, S., Naor, M., Papadopoulos, D., Reyzin, L.: Nsec5 from elliptic
  curves: Provably preventing dnssec zone enumeration with shorter responses.
  IACR Cryptology ePrint Archive  \textbf{2016}, ~83 (2016)

\bibitem{700bft}
Guerraoui, R., Kne{\v{z}}evi{\'c}, N., Qu{\'e}ma, V., Vukoli{\'c}, M.: The next
  700 {BFT} protocols. In: Proceedings of the 5th European conference on
  Computer systems. pp. 363--376. ACM (2010)

\bibitem{dfinity}
Hanke, T., Movahedi, M., Williams, D.: Dfinity technology overview series,
  consensus system. arXiv preprint arXiv:1805.04548  (2018)

\bibitem{ouroboros}
Kiayias, A., Russell, A., David, B., Oliynykov, R.: Ouroboros: A provably
  secure proof-of-stake blockchain protocol. In: Annual International
  Cryptology Conference. pp. 357--388. Springer (2017)

\bibitem{byzcoin}
Kokoris{-}Kogias, E., Jovanovic, P., Gailly, N., Khoffi, I., Gasser, L., Ford,
  B.: Enhancing bitcoin security and performance with strong consistency via
  collective signing. CoRR  \textbf{abs/1602.06997} (2016),
  \url{http://arxiv.org/abs/1602.06997}

\bibitem{omniledger}
Kokoris-Kogias, E., Jovanovic, P., Gasser, L., Gailly, N., Ford, B.:
  Omniledger: A secure, scale-out, decentralized ledger. IACR Cryptology ePrint
  Archive  (2017), \url{https://eprint.iacr.org/2017/406.pdf}

\bibitem{zyzzyva}
Kotla, R., Alvisi, L., Dahlin, M., Clement, A., Wong, E.: Zyzzyva: speculative
  byzantine fault tolerance. In: ACM SIGOPS Operating Systems Review. vol.~41,
  pp. 45--58. ACM (2007)

\bibitem{tendermint}
Kwon, J.: Tendermint: Consensus without mining.
  \url{https://tendermint.com/static/docs/tendermint.pdf} (2014)

\bibitem{lamport}
Lamport, L., Shostak, R., Pease, M.: The byzantine generals problem. ACM
  Transactions on Programming Languages and Systems (TOPLAS)  \textbf{4}(3),
  382--401 (1982)

\bibitem{conflux}
Li, C., Li, P., Xu, W., Long, F., Yao, A.C.c.: Scaling nakamoto consensus to
  thousands of transactions per second. arXiv preprint arXiv:1805.03870  (2018)

\bibitem{libra}
Libra: \url{https://www.libra.org}

\bibitem{stellar}
Mazieres, D.: The stellar consensus protocol: A federated model for
  internet-level consensus. Stellar Development Foundation  (2015)

\bibitem{vrf1}
Micali, S., Rabin, M., Vadhan, S.: Verifiable random functions. In: 40th Annual
  Symposium on Foundations of Computer Science (Cat. No. 99CB37039). pp.
  120--130. IEEE (1999)

\bibitem{miller}
Miller, A., Xia, Y., Croman, K., Shi, E., Song, D.: The honey badger of {BFT}
  protocols. In: Proceedings of the 2016 ACM SIGSAC Conference on Computer and
  Communications Security. pp. 31--42. ACM (2016)

\bibitem{nakamoto}
Nakamoto, S.: Bitcoin: A peer-to-peer electronic cash system (2008),
  \url{https://bitcoin.org/bitcoin.pdf}

\bibitem{parity}
Parity: Proof-of-authority chains - wiki parity tech documentation.
  \url{https://wiki.parity.io/Proof-of-Authority-Chains}

\bibitem{hybrid}
Pass, R., Shi, E.: Hybrid consensus: Efficient consensus in the permissionless
  model. IACR Cryptology ePrint Archive  (2016),
  \url{http://eprint.iacr.org/2016/917.pdf}

\bibitem{plasma}
Poon, J., Buterin, V.: Plasma: Scalable autonomous smart contracts.
  \url{https://plasma.io/plasma.pdf} (2017)

\bibitem{lightning}
Poon, J., Dryja, T.: The bitcoin lightning network: Scalable off-chain instant
  payments. Technical Report (draft)  (2015),
  \url{https://lightning.network/lightning-network-paper.pdf}

\bibitem{tangle}
Popov, S.: The tangle. \url{https://iota.org/IOTA\_Whitepaper.pdf} (2014)

\bibitem{ren2}
Ren, Z., Cong, K., Pouwelse, J., Erkin, Z.: Implicit consensus: Blockchain with
  unbounded throughput. CoRR  \textbf{abs/1705.11046} (2017),
  \url{http://arxiv.org/abs/1705.11046}

\bibitem{avalanche}
Rocket, T.: Snowflake to avalanche: A novel metastable consensus protocol
  family for cryptocurrencies (2018)

\bibitem{phantom}
Sompolinsky, Y., Zohar, A.: Phantom: A scalable blockdag protocol (2018)

\bibitem{spectre}
Sompolinsky, Y., Lewenberg, Y., Zohar, A.: Spectre : Serialization of
  proof-of-work events : Confirming transactions via recursive elections (2017)

\bibitem{ghost}
Sompolinsky, Y., Zohar, A.: Secure high-rate transaction processing in bitcoin.
  In: International Conference on Financial Cryptography and Data Security. pp.
  507--527. Springer (2015)

\bibitem{vechain}
Vechain: \url{https://www.vechain.com}

\bibitem{ethereum}
Wood, G.: Ethereum: A secure decentralised generalised transaction ledger.
  Ethereum Project Yellow Paper  \textbf{151} (2014),
  \url{http://gavwood.com/paper.pdf}

\bibitem{polkadots}
Wood, G.: Polkadot: Vision for a heterogeneous multi-chain framework.
  \url{https://github.com/polkadot-io/polkadot-white-paper} (2016)

\bibitem{hotstuff}
Yin, M., Malkhi, D., Reiter, M.K., Gueta, G.G., Abraham, I.: Hotstuff: Bft
  consensus in the lens of blockchain. arXiv preprint arXiv:1803.05069  (2018)

\bibitem{ncmax}
Zhang, R.: Phd thesis: Analyzing and improving proof-of-work consensus
  protocols (2019)

\bibitem{metric}
Zhang, R., Preneel, B.: Lay down the common metrics: Evaluating proof-of-work
  consensus protocols' security. In: 2019 IEEE Symposium on Security and
  Privacy (SP). pp. 175--192. IEEE (2019)

\end{thebibliography}

\end{document}